\newtcbox{\mymath}[1][]{%
    nobeforeafter, math upper, tcbox raise base,
    enhanced, colframe=blue!30!black,
    colback=blue!30, boxrule=1pt,
    #1}
\newsavebox{\mysaveboxM}
\newsavebox{\mysaveboxT}
\newcommand{\dd}{\mathrm{d}}
\newcommand{\DD}{\mathrm{D}}
\newcommand{\w}{\wedge}
\newcommand{\be}{\begin{equation}}
\newcommand{\ee}{\end{equation}}
\def\nn{\nonumber}
\def \bea{\begin{eqnarray}} 
\def\eea{\end{eqnarray}}
\newcommand{\mf}{\mathfrak}
\def\mc{\mathcal}
\def\bi{\begin{itemize}} 
\def\ei{\end{itemize}}
\def\E{\textit{\tiny{E}}} 
\def\Eh{\widehat{\textit{\tiny{E}}}} 
\newcommand{\sbullet}{%
  \hbox{\fontfamily{lmr}\fontsize{.8\dimexpr(\f@size pt)}{0}\selectfont\textbullet}}
\DeclareRobustCommand{\mathbullet}{\accentset{\sbullet}}
\newtheorem{prop}[equation]{Proposition}
\newtheorem{defn}[equation]{Definition}
\newtheorem{rmk}[equation]{Remark}
\newtheorem{exa}[equation]{Example}
\def\a{\alpha} \def\b{\beta}  \def\G{\Gamma} \def\d{\delta} 
\def\e{\epsilon} 
   \def\k{\kappa}
\def\l{\lambda}  \def\m{\mu}
\def\n{\nu}    \def\r{\rho}
\def\s{\sigma} \def\S{\Sigma}
\def\H{\textit{\tiny{H}}}
\def\one{\mbox{1 \kern-.59em {\rm l}}}
\numberwithin{equation}{section}
\begin{document}

\makeatother
\parindent=0cm
\renewcommand{\title}[1]{\vspace{10mm}\noindent{\Large{\bf #1}}\vspace{8mm}} \newcommand{\authors}[1]{\noindent{\large #1}\vspace{5mm}} \newcommand{\address}[1]{{\itshape #1\vspace{2mm}}}

\begin{titlepage}
	
	\begin{flushright} 
RBI-ThPhys-2024-06
\end{flushright}
	
	\begin{center}
		
		
		\title{ {\Large Brane mechanics and gapped Lie n-algebroids }}
		
		\vskip 3mm
		
		  \authors{ Athanasios Chatzistavrakidis$^{\,\a}$, Toni Kod\v{z}oman$^{\,\a}$, Zoran \v{S}koda$^{\,\b}$
		\large
		    }
		 
		 \vskip 3mm
		 
		  \address{ $^{\a}$\,Division of Theoretical Physics, Ruđer Bo\v skovi\'c Institute \\ Bijeni\v cka 54, 10000 Zagreb, Croatia 
    
    \ 
    
    $^{\b}$ Faculty of Teacher's Education, University of Zadar \\  F. Tudjmana 24, 23000 Zadar, Croatia}
		
		\vskip 2cm
		
		\begin{abstract}
			\noindent
			We draw a parallel between the BV/BRST formalism for higher-dimensional ($\ge 2$) Hamiltonian mechanics and higher notions of torsion and basic curvature tensors for generalized connections in specific Lie $n$-algebroids based on homotopy Poisson structures. The gauge systems we consider include Poisson sigma models in any dimension and ``generalised R-flux'' deformations thereof, such as models with an $(n+2)$-form-twisted R-Poisson target space. Their BV/BRST action includes interaction terms among the fields, ghosts and antifields whose coefficients acquire a  geometric meaning by considering twisted Koszul multibrackets that endow the target space with a structure that we call a gapped almost Lie $n$-algebroid. Studying  covariant derivatives along $n$-forms, we define suitable polytorsion and basic polycurvature tensors and identify them with the interaction coefficients in the gauge theory, thus relating models for topological $n$-branes to differential geometry on Lie $n$-algebroids. 
		\end{abstract}
		
	\end{center}
	
	\vskip 2cm
	
\end{titlepage}

\tableofcontents 

\section{Introduction}
\label{sec0} 

In recent years there is a rekindled  interest in higher structures in classical and quantum field theory. A major role in this has been played by the BRST formalism for quantization of gauge systems, see e.g. \cite{HTbook,Barnich:2000zw} for context and  literature, which has often been a guide for uncovering novel mathematical structures behind field theory. These include (strong) homotopy algebras \cite{Zwiebach:1992ie,Lada:1992wc} and their categorical extensions to algebroids and the (super)geometry of differential graded manifolds. 

Most approaches, especially in the physics literature, build primarily on the algebraic aspects of such structures. Specifically, the usual objects of interest are higher brackets, see for example the general discussion of the relation between perturbative gauge field theories and $L_{\infty}$ algebras in \cite{Hohm:2017pnh}. Less light has been shed on differential geometric aspects such as connections and their torsion and curvature tensors. An exception is Courant algebroids whose natural appearance in string theory through generalised geometry \cite{Hitchin,Gualtieri} sparked a study of generalised connections and tensors on them, see e.g. \cite{Gualtieri:2007bq}. From a physics perspective such a geometrical viewpoint is natural, not only for aesthetic or conceptual reasons but also for the construction of (super)gravity theories, as for example in Refs. \cite{Coimbra:2011nw,Hohm:2010pp,Hohm:2012mf,Jurco:2016emw,Severa:2018pag,Boffo:2019zus,Jurco:2022nlr,Aschieri:2019qku}. 

 Within the broader context of establishing relations between higher geometric structures and field theory, in this paper we aim at providing a complete geometric explanation of the structural data that appear in the BV/BRST formulation of a certain class of topological sigma models in any dimension, the so-called twisted R-Poisson sigma models introduced in \cite{Thanasis R-Poisson} and further generalized in \cite{Ikeda:2021rir}. These are explicit and  tractable models of \v{S}evera's general theme of higher-dimensional Hamiltonian mechanics, which in turn present a way to model symplectic Lie $n$-algebroids (termed ``$\S_{n}$-manifolds'') as variational problems \cite{Pavol some title}. The theme follows the logical route ``$n=0$ is for symplectic, $n=1$ is for Poisson (particle mechanics), $n=2$ is for Courant (string mechanics), $n=$ many is for $\Sigma_{n}$-manifolds (brane mechanics)''. 

Twisted R-Poisson sigma models were motivated from two sides. On one hand from world volume studies of closed string backgrounds with ``R-flux''  \cite{Halmagyi:2008dr,Mylonas:2012pg,Chatzistavrakidis:2015vka,Bessho:2015tkk}; in that case R is a trivector and the structure is actually untwisted. On the other hand, from the viewpoint of twisting geometric structures on target spaces of sigma models. A prototype in this spirit is the Wess-Zumino-Poisson sigma model in two dimensions \cite{Klimcik:2001vg}, in which case the would-be Poisson structure on the target space is twisted by a 3-form   and Poisson-ness is obstructed \cite{Severa Weinstein}. 

Twisted R-Poisson structures and their associated higher dimensional WZW-like sigma models are the answer to various questions about extensions of these themes. They provide generalizations of R-flux models for higher-dimensional branes with an $(n+1)$-vector deformation, as well as a manifestly target space covariant formulation of the three- and higher-dimensional cases. They also provide an answer to the question \emph{``what is a Poisson sigma model in dimensions greater than 2?''}, which we will clarify in the present paper. Twisted R-Poisson sigma models go beyond this realm though, including Wess-Zumino terms that lead to a certain twisting. Thus in 3D they provide concrete realizations of twisted Courant sigma models with 4-form Wess-Zumino term \cite{Hansen:2009zd}, associated to pre-Courant algebroids \cite{Vaisman:2004msa}, and more generally they yield concrete higher generalizations of these structures. Viewed differently, they are possibly the simplest gauge theories that feature simultaneously all the main characteristics that the BV formalism was constructed for, namely an open gauge algebra, field-dependent structure functions and reducibilities. What is more, they exhibit the unorthodox feature of a non-linearly open gauge algebra, since the latter closes on products of field equations.

The untwisted version of these theories is a slice in the space of AKSZ sigma models in various dimensions \cite{Alexandrov:1995kv}, with the BV master action of the Poisson sigma model being the 2D case \cite{Cattaneo:1999fm,Cattaneo:2001ys}. The twisted version refers to Wess-Zumino-AKSZ theories \cite{Ikeda:2013wh}, which are less studied. The new, Wess-Zumino-induced, contributions to the solution of the classical master equation of these models appear in a derivative expansion of the AKSZ action in terms of superfields and therefore they are somewhat hidden. In other words, they appear in the coefficients of the interaction terms in the master action, which usually involve several ghosts and antifields and they are only visible in the expanded form of the action. A  geometrical treatment of such contributions in the 2D case \cite{Ikeda:2019czt} revealed that they are accounted for with the aid of a connection on the relevant cotangent Lie algebroid, in which case the interaction coefficients become torsion and basic curvature tensors. Retrospectively, as explained in \cite{Chatzistavrakidis:2023otk}, this is in line with expectations since the same happens in the topological A and B models{\footnote{For the relation of the Poisson sigma model and the A or B model, see \cite{Alexandrov:1995kv,Bonechi:2016wqz}.}} \cite{Witten:1988xj}. Note that auxiliary connections were introduced to restore target space covariance and discuss global issues in topological field theories in \cite{Baulieu:2001fi,Cattaneo:2000hz,Bojowald:2004wu}. Analogous results in the general 3D case of 4-form twisted Courant sigma models appear in \cite{Chatzistavrakidis:2023lwo}.   

In this work we take this correspondence between field theory and geometry one step further by demonstrating the relation of the interaction coefficients in the master action of twisted R-Poisson sigma models to new tensorial objects for generalised connections on a specific class of Lie $n$-algebroids. In particular, we first identify the covariant form of these coefficients. Then we introduce higher Koszul brackets on the cotangent bundle of the target space in the same way as in \cite{Voronov2} and define gapped Lie $n$-algebroids, which are generated by these multibrackets. Note that the multibrackets are not $C^{\infty}$-linear but instead they satisfy Leibniz rules with respect to higher anchor maps. Gapped Lie $n$-algebroids are to R-Poisson manifolds (of order $n+1$) what Lie algebroids are to Poisson manifolds. Crucially, twisted structures depart from this definition and unlike twisted Poisson manifolds that still give rise to Lie algebroids, twisted R-Poisson manifolds are gapped \emph{almost} Lie $n$-algebroids.

Once the basic mathematical structure underlying the gauge theory is identified, we turn our attention to connections and tensors. We define covariant differentiation along $n$-forms and use it to introduce two new tensorial objects. One of them is similar to the torsion tensor, albeit it is a vector-valued $(n+1)$-form instead. We choose to call it the polytorsion tensor, since for $n=1$ it agrees with the usual $E$-torsion tensor of a Lie algebroid $E$-connection. The other is a generalization of the so-called basic curvature tensor \cite{Blaom,Crainic}, which measures the compatibility of a connection on a Lie algebroid with the binary Lie bracket on its sections. Here we define a basic polycurvature tensor that measures the compatibility of a connection with the higher $(n+1)$-ary Koszul bracket on the gapped Lie $n$-algebroid.  Computing the explicit expressions of the polytorsion and basic polycurvature tensors, we indeed find that they are identical to the interaction coefficients in the twisted R-Poisson sigma models and consequently they appear in their BRST differential. 

The manuscript is organised in the following way. In Section \ref{sec2} we present the topological sigma models that are the main focus of our analysis, with emphasis on aspects that have not been highlighted before. In particular, Section \ref{sec21} contains the answer to the question: \emph{What is a Poisson sigma model in more than two dimensions?} In Section \ref{sec22} we proceed with a generalised R-flux deformation that naturally leads to twisted R-Poisson sigma models and in Section \ref{sec23} we present their formulation as AKSZ theories, ending up with the interaction coefficients that we would like to account for geometrically. In Section \ref{sec3} we introduce higher Koszul brackets on the cotangent bundle of the target space and analyse the possibilities for twisting these brackets. Section \ref{sec4} contains the answer to the question: \emph{Which algebroid is for a (twisted) R-Poisson manifold what a Lie algebroid is for a (twisted) Poisson manifold?} This leads us to gapped (almost) Lie $n$-algebroids, for which we also discuss their description as differential graded manifolds. Section \ref{sec5} discusses generalized connections for such Lie $n$-algebroids and provides a definition of the polytorsion and its relation to the BRST coefficients. In a similar spirit, Section \ref{sec6} discusses the basic polycurvature tensor. Finally, we present our conclusions in Section \ref{sec7}. 

\section{Brane mechanics on a twisted R-Poisson manifold}
\label{sec2}

\subsection{Poisson sigma models in higher dimensions \& WZ terms} 
\label{sec21} 

We begin with a class of topological field theories whose target space is a Poisson manifold. Usually this refers to the Poisson sigma model in two dimensions \cite{Schaller:1994es,Ikeda:1993fh}. However, one may model Poisson manifolds as variational problems in any dimension as can be invoked by the more general construction in \cite{Thanasis R-Poisson}. Hence we use this more general perspective and think of a Poisson sigma model as one which (i) contains a scalar sector that is modelled on maps from a source manifold of any dimension to a Poisson manifold $(M,\Pi)$, and (ii) exhibits gauge symmetries such that the gauge invariance of the action functional is based solely on the condition of $\Pi$ being a Poisson structure, namely on the Jacobi identity of the associated Poisson bracket. This section discusses these models and the possibilities regarding their extension by a Wess-Zumino term; in the present case this is very restrictive, at least with the minimal data we  assume here, with the 2D model being possible to extend using the twisted Poisson structure of \cite{Severa Weinstein} as shown in \cite{Klimcik:2001vg}, whereas the higher dimensional models yield an untwisted extension with a closed $(n+2)$-form. However, this analysis will set up the problem for the more general models to be discussed in Section \ref{sec22} and their geometric interpretation in terms of the structures introduced in Sections \ref{sec3}-\ref{sec6}.  

To formulate the field theory we need some minimal data which we briefly introduce here; their place in a more general context will be discussed in the following sections. Consider a Poisson manifold $(M,\Pi)$; by definition the 2-vector $\Pi$ is a Poisson structure satisfying 
\be [\Pi,\Pi]_{\text{SN}}=0\,, \ee
in terms of the Schouten-Nijenhuis bracket for multivector fields, whose precise definition in our conventions is given later in Section \ref{sec3}. Presently it suffices to recall that in local coordinates this Poisson condition takes the familiar form
\be 
\Pi^{\k[\m}\partial_\k\Pi^{\n\rho]}=0\,,
\ee 
where the brackets denote antisymmetrization with weight 1 in all indices within them, here 3, and the components of the Poisson structure in a coordinate basis are $\Pi^{\m\n}=-\Pi^{\n\m}$. 
A Poisson manifold gives rise to a Lie algebroid on its cotangent bundle, which will also be described in Section \ref{sec3}. Alternatively, for the present discussion it is more useful to think in terms of the corresponding dg manifold, namely in terms of a homological vector field \cite{Vaintrob}. The cotangent Lie algebroid can be represented as such a vector field on the parity reversed cotangent bundle whose fiber degree is shifted by 1, denoted $T^{\ast}[1]M$. Denoting moreover by $x^{\m}$ a local coordinate system on $M$ and by $a_{\m}$ the degree 1 fiber coordinates, the vector field 
\be \label{Q Poisson}
Q^{\ast}=\Pi^{\m\n}(x)a_\m\frac{\partial}{\partial x^\n}-\frac 12 \partial_\rho\Pi^{\m\n}(x)a_\m a_\n \frac{\partial}{\partial a_\rho}
\ee  
is homological, namely the graded Lie bracket with itself vanishes, if and only if $(M,\Pi)$ is a Poisson manifold: 
\be 
(Q^{\ast})^2=\frac 12 [Q^{\ast},Q^{\ast}]=0 \qquad \Longleftrightarrow \qquad [\Pi,\Pi]_{\text{SN}}=0\,.
\ee 
The asterisk is just a notation that reminds us that we work on the cotangent bundle. Denoting $\mc{M}=T^{\ast}[1]M$, the pair $(\mc M,Q^{\ast})$ is a dg manifold, also called an NQ1 manifold in this context, since all coordinates have non negative degree---we skip the N in the notation from now on. In fact it is even a QP1 manifold, P referring to the Poisson structure on $\mc M$ that naturally exists since the dg manifold is a cotangent bundle. Poisson manifolds and QP1 structures are in one to one correspondence \cite{Roytenberg:2002nu}. 

This is, however, not the only way that we can encode a Poisson structure in a homological vector field. Rather it is a special ``$n=1$''-type case of a more general statement that is based on lifting structures to higher order bundles and goes as follows. Consider instead the graded manifold $\mc M_n=T^\ast[n]T^\ast[1]M$ with the base $M$ a Poisson manifold as before, without any additional structures defined on it; cf. \cite{Voronov2}. The graded manifold $\mc M_n$ can be endowed with a local coordinate system $\{x^\a\}=\{x^{\m},a_{\m},y^{\m},z_{\m}\}$ of degrees $0, 1, n-1, n$ respectively. It is a shifted second order cotangent bundle. Then the vector field 
\be
Q^{\ast}_{n}=Q^{\ast}+\big((-1)^{n}\Pi^{\m\n}z_\m-\partial_\l\Pi^{\n\m}a_\m y^{\l}\big)\frac{\partial}{\partial{y^{\n}}}  -  \left(\partial_\k\Pi^{\m\n}a_\m z_\n+\frac {(-1)^{n}}2 \partial_\k\partial_\l\Pi^{\m\n}y^{\l}a_\m a_\n\right)\frac{\partial}{\partial{z_{\k}}}\label{Qn}
\ee
is homological and therefore $(\mc M_n, Q^{\ast}_n)$ is a dg manifold if and only if $(M,\Pi)$ is a Poisson manifold, as assumed.  
Note that this is a distinct yet not unrelated statement from the fact that  $n=1$ corresponds to Poisson manifolds and $n=2$ to Courant algebroids \cite{Roytenberg:2002nu}. Here we explained how to model a Poisson structure for any $n$. Nevertheless, the model for $n=2$ is a special case of the homological vector field of a Courant algebroid with the anchor given by the map corresponding to the Poisson structure. This Courant algebroid is a Lie bialgebroid formed by summing a totally intransitive Lie algebroid on the tangent bundle $TM$ and the cotangent Lie algebroid on the Poisson manifold.  Lie bialgebroids give rise to Courant algebroids in a straightforward manner \cite{Liu:1995lsa}. Similarly, for general $n$ this can also be seen to correspond to analogous structures for higher Courant algebroids as discussed in Ref. \cite{Ikeda:2021rir}.

Whereas the homological vector field in \eqref{Qn} is more complicated than the usual one in \eqref{Q Poisson}, it serves the purpose that it gives rise to topological field theories in any dimension, unlike the one in \eqref{Q Poisson} that is suitable only for 2-dimensional theories. Note moreover the asymmetry in the treatment; taking $n=1$ it turns out that $Q^{\ast}_1$ is not identical to $Q^{\ast}$. To avoid such confusion in the following we restrict to the case $n>1$ which is relevant in dimensions greater than 2. For 2 dimensions we reserve the standard homological vector field of Eq. \eqref{Q Poisson}.{\footnote{Note, however, that 2D models with target space $T^{\ast}[1]T^{\ast}[1]M$ are interesting in their own right as ``doubled sigma models'' in the sense that they comprise two set of real scalar fields \cite{Thanasis R-Poisson}. In a different direction they can also serve as the structure that leads to supersymmetric Poisson sigma models when the additional fields are assigned different parity \cite{Arias:2015wha,Arias:2016agc}.}}

Then the Poisson sigma model in $n+1$ dimensions is given by the action functional 
\be 
S^{(n+1)}_{\text{\,PSM}}=\int\left(Z_{\m}\w\dd X^{\m}-A_{\m}\w\dd Y^{\m}+\Pi^{\m\n}Z_{\m}\w A_{\n}-\frac 12 \partial_{\rho}\Pi^{\m\n}\, Y^{\rho}\w A_{\m}\w A_{\n}\right)\,.
\ee 
The fields are an $n$-form $Z_{\m}$, an $(n-1)$-form $Y^{\m}$, a 1-form $A_{\m}$ and the $\text{dim}\,M$ real scalars $X^{\m}$. We note that for $n=1$ the choice $A_{\m}=\frac 12 Z_{\m}$ and $Y^{\m}=0$ gives the usual 2-dimensional Poisson sigma model. Rather than discussing further these models, we are going to proceed with a more general class that contains them as special cases. We only note here that the field equations of the higher dimensional model include 
\bea 
F^{\m}&:=&\dd X^{\m}+\Pi^{\m\n}A_{\n}=0\,, \\[4pt] 
G_{\m}&:=& \dd A_{\m}+\frac 12 \partial_{\m}\Pi^{\n\rho}A_{\n}\w A_{\rho}=0\,,
\eea  
which are the same as the ones for the 2-dimensional model. There are of course two more field equations in this higher dimensional case. 

Before closing this section, let us ask what happens when the Poisson manifold $M$ is also equipped with a closed $(n+2)$-form $H\in \Omega_{\text{cl}}^{n+2}(M)$. Then we can extend the action functional by a Wess-Zumino term as 
\be \label{WZPSM n+1}
S^{(n+1)}_{\text{WZ-PSM}}=S^{(n+1)}_{\text{PSM}}+\int_{\S_{n+2}}X^{\ast}H\,,
\ee 
with the $(n+1)$-dimensional theory living on the boundary of $\S_{n+2}$ such that it does not depend on the higher brane invoked to support the Wess-Zumino term \cite{Witten:1983ar}. The gauge symmetries $\d$ of the extended model are the same as the non extended ones, say $\d_0$, save the gauge transformation of the top form field $Z_{\m}$, which changes to 
\be  
\d Z_{\m}=\d_0 Z_{\m}-\frac 1{(n+1)!}\Pi^{\k\l}H_{\m\l\n_1\dots\n_n}\Omega^{\n_1\dots\n_n}\e_{\k}\,
\ee 
where $\e_{\k}$ is the scalar gauge parameter of the model and the world volume $n$-form $\Omega$ is given by the following formula:
\be 
\Omega^{\n_1\dots\n_n}=\sum_{r=1}^{n+1}(-1)^{r}\prod_{s=1}^{r-1}\dd X^{\n_s}\prod_{t=r}^{n}\Pi^{\n_t\m_t}A_{\m_t}\,.
\ee 
This guarantees that the extended action functional is gauge invariant. We call this the higher dimensional Wess-Zumino-Poisson sigma model. It should be mentioned that only in 2 dimensions this leads to a twist of the Poisson structure. For dimensions 3 and higher the Poisson structure  cannot be twisted by $H$ and we end up simply with a Poisson structure and a closed $(n+2)$-form. A more interesting situation arises when we allow for an additional deformation given by a 3-vector as we discuss immediately below.

\subsection{Twisted R-Poisson brane mechanics}\label{sec22}

The models obtained through the simple lifting procedure described above can be deformed further. This is obvious from the fact that QP$n$ manifolds correspond to more general structures than Poisson manifolds. For $n=2$ they are Courant algebroids \cite{Roytenberg:2002nu} and for arbitrary $n$ they are symplectic Lie $n$-algebroids or ``$\S_n$-manifolds'' in the terminology of \cite{Pavol some title}. They give rise to higher dimensional Hamiltonian mechanics via the AKSZ construction as described in \cite{Pavol some title}. What we describe below, including the higher dimensional Poisson sigma models above, is a concrete realization of a particular class of such models in any dimension and of their extension by a Wess-Zumino term. We use the terminology ``brane mechanics'' to refer to all cases, twisted or not, for brevity. 

Let us clarify one important point that regards the Wess-Zumino terms that lead to twisted geometric structures. We mentioned already that 3-form twisted Poisson structures also give rise to a sigma model, even though they are not QP1 manifolds. They are, however, Q1 manifolds, namely Lie algebroids. Similarly for $n=2$ there exist sigma models with target space a Q2 manifold. These are 4-form twisted Courant sigma models \cite{Hansen:2009zd}, based on a relaxed structure arising from Courant algebroids by allowing violation of the Jacobi identity for the Dorfman bracket \cite{Vaisman:2004msa}. We will not discuss this further but rather we extend this logic to arbitrary dimensions and consider a special class of Q$n$ manifolds whose classical geometrical description is given in the rest of this paper. 

The (topological) field theory we consider is a deformation of the action functional \eqref{WZPSM n+1} of the $(n+1)$-dimensional Wess-Zumino Poisson sigma model. It features an additional $(n+1)$-vector $R=(R^{\m_1\dots\m_{n+1}})$ and its action functional is 
\be  
S^{(n+1)}_{\text{WZ-RPSM}}=S^{(n+1)}_{\text{WZ-PSM}}+\int \frac 1{(n+1)!}R^{\m_1\dots \m_{n+1}}A_{\m_1}\w \dots \w A_{\m_{n+1}}\,.
\ee 
This works in any dimension and the target space has the structure of a twisted R-Poisson manifold of order $n+1$ \cite{Thanasis R-Poisson}. This means that $\Pi$ is a (genuine) Poisson structure, $H$ is a closed $(n+2)$ form and $R$ is such that the following ($H$-twisted) condition holds: 
\be 
[\Pi,R]_{\text{SN}}=(-1)^{n+1}\langle\Pi^{\otimes(n+2)},H\rangle\,,
\ee
where the notation $\Pi^{\otimes(n+2)}$ means the $(n+2)$-fold tensor product of the 2-vector and the $n+2$ contractions are on the first index of each $\Pi$. 
We give more details and context for this structure in the next section. Much like the ordinary Poisson structure, the conditions for a twisted R-Poisson manifold can be encoded in a homological vector field that fully captures the classical gauge symmetries of the above model. It is given as 
\bea 
Q_H^{{RP_n}}=Q^{\ast}_n+\frac 1{n!}R^{\m\n_1\dots \n_{n}}a_{\n_1}\dots a_{\n_{n}}\frac{\partial}{\partial y^{\m}} 
  +\frac{(-1)^{n}}{(n+1)!}f_{\m}{}^{\n_1\dots\n_{n+1}}a_{\n_1}\dots a_{\n_{n+1}}\frac{\partial}{\partial z_{\m}}\,, \label{Q RPnH}
  \eea 
where we have defined  
\be 
  f_{\m}{}^{\n_1\dots\n_{n+1}}=\partial_{\m}R^{\n_1\dots \n_{n+1}}-H_{\m}{}^{\n_1\dots \n_{n+1}}\,,
  \ee 
and the indices in the $(n+2)$-form are raised with the 2-vector according to the convention
\be 
H^{\m_1\dots\m_{n+1}}{}_{\m}=\Pi^{\m_1\n_1}\dots\Pi^{\m_{n+1}\n_{n+1}}H_{\n_1\dots\n_{n+1}\m}\,.
 \ee 
The classical gauge symmetries of the twisted R-Poisson sigma model described above follow directly from the components of the homological vector field, as in any gauge theory, see e.g. \cite{Grutzmann:2014hkn}. Denoting the fields of the theory as $(\phi^{\a})=(X^{\m}, A_{\m}, Y^{\m}, Z_{\m})$, the gauge transformations for all fields read 
\be 
\d \phi^{\a}=\dd\e^{\a}+\e^{\b}\partial_{\b}Q^{\a}+\d_{\text{triv}}\phi^{\a}\,,
\ee 
where the gauge parameters are collectively denoted as $\e^{\a}$, $Q^{\a}$ are the components of the homological vector field $Q_{H}^{RP_n}$ and $\d_{\text{triv}}$ vanishes for all fields apart from $Z_{\m}$ for which 
\be 
\d_{\text{triv}}Z_{\m}=\frac 1{(n+1)!}\Pi^{\k\l}H_{\m\l\n_1\dots \n_n}\sum_{r=1}^{n}(-1)^{r}\binom{n+1}{r+1}\prod_{s=1}^{r} F^{\n_{s}}\prod_{t=r+1}^{n}\Pi^{\n_t\r_t}A_{\r_t}\epsilon_\k\,.
\ee 
We observe that it is given by an expression that contains a sum over products of the field equation of the field $Z_{\m}$. This is expected in view of the fact that the models we introduced have open gauge algebra and moreover this gauge algebra contains products of field equations. This property was called non-linear openness in \cite{CIS}. The 3D twisted R-Poisson sigma model seems to be the simplest gauge theory that exhibits this feature. 

\subsection{Interaction coefficients, Hamiltonian lift and BRST}\label{sec23}

The untwisted R-Poisson sigma models are a very simple and explicit case of AKSZ sigma models in arbitrary dimensions. As such they serve as an illustration of the power of the AKSZ formalism, since we can immediately write down their master action and prepare them for quantization.{\footnote{The complete quantization in higher than 2 dimensions is of course a hard problem.}} All we have to do is to first identify the extended sector of ghosts, ghosts of ghosts and antifields of the theory, define 4 superfields on the shifted tangent bundle $T[1]\S_{n+1}$ of the world volume and upgrade the classical action to a master action by writing it in terms of superfields. 

To give some details about what we just described in words, note that the full field content in the field/antifield formulation of the models is compactly written as 
\bea 
&&\text{classical fields} \quad X, A, Y, Z \quad \text{with} \quad (0,0), (0,1), (0,n-1), (0,n) \nn\\[4pt] 
&&\text{ghosts} \quad \epsilon, \chi, \psi \quad \text{with} \quad (1,0), (1,n-2), (1,n-1) \nn\\[4pt]
&& \text{ghosts of ghosts} \quad \chi_{(r)}, \psi_{(s)} \quad\text{with}\quad (r+1,n-r-2), (s+1,n-s-2)\,, \nn
\eea 
where we skipped the obvious index structure to avoid unnecessary clutter, the two towers of ghosts of ghosts are indexed as $r=1,\dots,n-2$ and $s=1,\dots,n-1$, and the parentheses contain the ghost degree and the form degree of each field respectively. Similarly for the antifields of all the fields above,
\bea 
&&X^{\dagger}, A^{\dagger}, Y^{\dagger}, Z^{\dagger} \quad \text{with} \quad (-1,n+1), (-1,n), (-1,2), (-1,1) \nn\\[4pt] 
&&\e^{\dagger}, \chi^{\dagger}, \psi^{\dagger} \quad \text{with} \quad (-2,n+1), (-2,3), (-2,2) \nn\\[4pt] 
&& \chi^{\dagger}_{(r)}, 
\psi^{\dagger}_{(s)} \quad \text{with} \quad (-r-2,r+3), (-s-2,s+2)\,. \nn
\eea
Taking the total degree to be the sum of the ghost and form degrees, we observe that all the $4(n+2)$ fields and antifields above organise themselves in the following four superfields, each with $n+2$ terms presented in order of ascending form degree on $\S_{n+1}$, namely from the scalar to the top form: 
\bea 
\mathbf{X}&=&X+Z^{\dagger}+\psi^{\dagger}+\sum_{s=1}^{n-1}\psi^{\dagger}_{(s)}\,, \\[4pt] 
\mathbf{A}&=& \epsilon+A+Y^{\dagger}+\chi^{\dagger}+\sum_{r=1}^{n-2}\chi^{\dagger}_{(r)}\,,\\[4pt] 
\mathbf{Y}&=& \sum_{r=1}^{n-2}\chi_{(n-r-1)}+\chi +Y+A^{\dagger}+\e^{\dagger}\,,\\[4pt] 
\mathbf{Z}&=& \sum_{s=1}^{n-1}\psi_{(n-s)}+\psi+Z+X^{\dagger}\,.
\eea 
If we denote the coordinates on $T[1]\S_{n+1}$ by $\s^{m}$ (the degree 0 ones) and $\theta^{m}$ (the degree 1 ones), then for example $Z^{\dagger}=Z^{\dagger}_{m}\theta^{m}$, $\psi^{\dagger}=\frac 12 \psi^{\dagger}_{mn}\theta^{m}\theta^{n}$ and so on.

Then the minimal solution to the classical master equation in the BV/BRST formalism of the R-Poisson sigma models is given as 
\bea 
&& S_{\text{BV}}=\int_{T[1]\S_{n+1}}\bigg(\mathbf{Z}_{\m}\dd \mathbf{X}^{\m}-\mathbf{A}_{\m}\dd \mathbf{Y}^{\m}  + \Pi^{\m\n}(\mathbf{X})\mathbf{Z}_{\m} \mathbf{A}_{\n} 
 \nn\\[4pt]  && \qquad\qquad - \, \frac 12 \partial_{\rho}\Pi^{\m\n}(\mathbf{X})\, \mathbf{Y}^{\rho} \mathbf{A}_{\m}\mathbf{A}_{\n}+\frac{1}{(n+1)!}R^{\m_1\dots\m_{n+1}}(\mathbf{X})\mathbf{A}_{\m_1}\dots\mathbf{A}_{\m_{n+1}}\bigg)\,. \label{RPSM BV}
\eea 
We note that this gives a solution only in the untwisted case. This is to be expected because one can show that in that case the target space is a QPn manifold.  In the twisted case the target is a Qn manifold and the minimal solution is not as simple to determine, see \cite{CIS} for techniques and results on the three-dimensional case. 

The expression \eqref{RPSM BV} is remarkably simple, which is the power of the AKSZ construction, but on the other hand it hides the structure of the theory with regard to the antifields and the interactions among them and the fields. When one expands the master action in antifields as{\footnote{There are no scalar antifields in the theory, therefore the ``longest'' antifield interaction contains $n+1$ of them, specifically the 1-form antifields $Z^{\dagger}$ multiplied by $n+1$ ghosts $\e$ to give a zero ghost $2n+2$ fermion interaction term.}} 
\be 
S_{\text{BV}}=\sum_{i=0}^{n+1}S_{(i)}\,,
\ee 
with $S_{(i)}$ containing $i$ antifields and $S_{(0)}$ being the classical action, the simplicity is apparently lost. The expansion has the advantage of highlighting the specific interaction vertices and also of allowing a direct and explicit calculation of the BRST differential, which consists of the Koszul-Tate differential, the longitudinal differential and the many additional terms in the spirit of homological perturbation theory \cite{HTbook}, since the model is in general highly reducible as a gauge theory. The coefficients that multiply all these interaction terms are scalar field dependent, partial derivatives of the basic structures $\Pi^{\m\n}$ and $R^{\m_1\dots\m_{n+1}}$. As expected, in general they do not transform linearly under target space coordinate transformations, even though the theory is of course diffeomorphism invariant by construction. Note, moreover, that the derivative of the Poisson 2-vector appears already in the classical action for higher dimensional Poisson sigma models.  

As explained in the introduction, one of our goals is to give a geometric meaning to these coefficients in terms of suitable geometric structures in ordinary differential geometric language. To this end we introduce an affine connection $\nabla$ on the cotangent bundle of the Poisson manifold such that 
\be 
\nabla_{\m}\Pi^{\k\l}=\partial_{\m}\Pi^{\k\l}+\G^{\k}_{\m\n}\Pi^{\n\l}+\G^{\l}_{\m\n}\Pi^{\k\n}\,,\label{nabla pi}
\ee 
and similarly for other quantities, where $\Gamma^{\m}_{\n\rho}$ are the connection coefficients. Precise details will appear in the following sections. We will only consider torsion free affine connections (more precisely this refers to the dual connection on the tangent bundle) for which $\Gamma^{\m}_{[\n\rho]}=0$. Clearly the covariant derivative in Eq. \eqref{nabla pi} is the tensor that corresponds to the nontensorial partial derivative of the components of the Poisson 2-vector. It is known that this tensor can be expressed as the torsion of an induced $T^{\ast}M$-connection on $T^{\ast}M$ \cite{Ikeda:2019czt}, as we will recall in the following. Aside the first derivative on the Poisson tensor, in general we encounter coefficients with multiple derivatives on it and also with derivatives on the coefficients of the multivector $R$. 

We would now like to explain what is the miminal set of coefficients we need to account for in order to have a geometrical description and a coordinate free formulation and which is the precise geometric form of these coefficients in local coordinates. First note that the homological vector field $Q^{\ast}$ in \eqref{Q Poisson} can be written in an alternative basis of derivations as 
\be 
Q^{\ast}=\Pi^{\m\n}a_{\m}\DD_{\n}-\frac 12 \nabla_{\m}\Pi^{\k\l}a_\k a_\l \frac{\partial}{\partial a_{\m}}\,,
\ee 
where 
\be 
\DD_{\n}=\frac{\partial}{\partial x^{\n}}+\Gamma^{\m}_{\n\rho}a_{\m}\frac{\partial}{\partial a_{\rho}}\,.
\ee 
Observe that in accord with the interpretation of the degree 1 fiber coordinate $a_{\m}$ as the graded geometric counterpart of the basis of the tangent bundle, we have
\be 
\DD_{\n}a_{\m}=\G_{\n\m}^{\rho}a_{\rho}\,.
\ee 
The homological vector field contains the information of the longitudinal differential. The information of the full BRST differential is contained in the Hamiltonian vector field of its Hamiltonian lift to the cotangent bundle. The Hamiltonian lift of $Q^{\ast}$ is 
\be 
h_{Q^{\ast}}=\Pi^{\m\n}a_{\m}x^{\ast}_{\n}-\frac 12 \partial_{\m}\Pi^{\k\l}a_{\k}a_{\l}a_{\ast}^{\m}\,,
\ee 
where $x^{\ast}_{\m}$ and $a_{\ast}^{\m}$ are the momenta on $T^{\ast}T^{\ast}[1]M$ of the same parity as the coordinates \cite{dimaphd}. The lift $h_{Q^{\ast}}$ is a smooth function on the cotangent bundle of the target space. Its Hamiltonian vector field is given as 
\bea 
X_{h_{Q^{\ast}}}=Q^{\ast}-\bigg(\partial_\k\Pi^{\m\n}a_{\m}x^{\ast}_{\n}-\frac 12 \partial_\k\partial_\l\Pi^{\m\n}a_{\m}a_{\n}a_{\ast}^{\l}\bigg)\frac{\partial}{\partial x^{\ast}_{\k}}+\bigg(\Pi^{\k\m}x^{\ast}_{\m}-\partial_{\m}\Pi^{\k\n}a_{\n}a_{\ast}^{\m}\bigg)\frac{\partial}{\partial a_{\ast}^{\k}}\,.
\eea 
This lifting procedure generates a second derivative on the Poisson 2-vector, which is known to arise in the master action and in the BRST differential of the 2D Poisson sigma model, as one would expect. 
Moreover, the tensor hiding behind this second derivative was shown in \cite{Ikeda:2019czt} to be the so-called basic curvature tensor of the connection on $T^{\ast}M$ \cite{Blaom,Crainic}. This can be also seen by expressing this Hamiltonian vector field in a suitable basis, but we skip this here because we will essentially repeat it for the twisted R-Poisson case. 

We may now ask what is the corresponding rewriting of the homological vector field $Q^{RP_n}_{H}$ of a twisted R-Poisson structure such that the expansion coefficients are tensorial quantities? Starting from expression \eqref{Q RPnH} and performing the covariantization due, we obtain 
\bea 
Q^{RP_n}_{H}&=& \Pi^{\m\n}a_{\m}\DD_{\n}-\frac 12 \nabla_{\rho}\Pi^{\m\n}a_{\m}a_{\n}\DD^{\rho} \nn\\[4pt] 
&& +\, \big((-1)^{n}\Pi^{\m\n}z^{\scriptscriptstyle\nabla}_{\m}-\nabla_{\m}\Pi^{\n\r}a_{\r}y^{\m}
+\frac 1{n!}R^{\n\m_1\dots\m_{n}}a_{\m_1}\dots a_{\m_{n}}\big)\dot\DD_{\n} \nn\\[4pt] 
&& +\, \bigg(\nabla_{\n}\Pi^{\m\r}a_{\r}z^{\scriptscriptstyle\nabla}_{\m}-\frac {(-1)^n}{2}\big(\nabla_{\n}\nabla_{\m}\Pi^{\r\s}-2\Pi^{\k[\r}{\mc R}^{\s]}{}_{\m\k\n}\big)y^{\m}a_{\r}a_{\s}\bigg)\dot{\DD}^{\n} \nn\\[4pt] 
&& - \, \frac{(-1)^n}{(n+1)!}\bigg(\nabla_{\n}R^{\m_1\dots \m_{n+1}}-H_{\n}{}^{\m_1\dots\m_{n+1}}\bigg)a_{\m_1}\dots a_{\m_{n+1}}\dot{\DD}^{\n}\,,
\eea 
where $z_{\m}^{\scriptscriptstyle\nabla}=z_{\m}+\G_{\m\n}^{\r}y^\n a_\r$ is the ``covariant momentum'' that transforms linearly under general coordinate transformations, ${\mc R}$ is the curvature of the connection $\nabla$ and the basis of derivations is now given as 
\bea 
\DD_{\m}&=&\frac{\partial}{\partial x^{\m}}+\G_{\m\n}^{\r}\big(a_{\r}\frac{\partial}{\partial a_{\n}}-y^{\n}\frac{\partial}{\partial y^{\r}}+z^{\scriptscriptstyle\nabla}_{\rho}\frac{\partial}{\partial z^{\scriptscriptstyle\nabla}_{\n}}\big)-\big(\partial_{\n}\G_{\m\r}^{\s}+\mc{R}^{\s}{}_{\r\m\n}\big)y^{\r}a_{\s}\frac{\partial}{\partial z^{\scriptscriptstyle\nabla}_{\n}}\,, \\[4pt] 
\DD^{\m}&=&\frac{\partial}{\partial a_{\m}}+(-1)^{n}\Gamma^{\m}_{\n\r}y^{\n}\frac{\partial}{\partial z^{\scriptscriptstyle\nabla}_{\r}}\,, \\[4pt] 
\dot{\DD}_{\m}&=&\frac{\partial}{\partial y^{\m}}-\Gamma_{\m\n}^{\r}a_{\r}\frac{\partial}{\partial z^{\scriptscriptstyle\nabla}_{\n}}\,, \\[4pt] 
\dot{\DD}^{\m}&=&\frac{\partial}{\partial z^{\scriptscriptstyle\nabla}_{\m}}\,,
\eea 
cf. \cite{Chatzistavrakidis:2023lwo} for a similar construction for general (4-form twisted) Courant algebroids for $n=2$.
As before, we can also compute the Hamiltonian lift, a function on the cubic order bundle $T^{\ast}T^{\ast}[n]T^{\ast}[1]M$ and its Hamiltonian vector field which will now also contain third derivatives on the components of the 2-vector $\Pi$ as well as second derivatives on the components of the $(n+1)$-vector $R$. This is the minimal set of data, corresponding to interaction coefficients in the field theory, that we would like to explain in geometric terms. Namely, we have the following six tensorial structures that pop out from the covariant expressions of the twisted R-Poisson sigma model in $n+1$ dimensions: 
\bea 
&& \Pi^{\m\n}\,, \quad \nabla_{\r}\Pi^{\m\n}\,, \quad \nabla_{\k}\nabla_{\l}\Pi^{\m\n}-2\Pi^{\r[\m}{\mc R}^{\n]}{}_{\l\r\k}\,, \\[4pt] 
&& R^{\m_1\dots \m_{n+1}}\,,\quad \nabla_{\n}R^{\m_1\dots\m_{n+1}}-H_{\n}{}^{\m_1\dots\m_{n+1}}\,, \\[4pt] 
&& \nabla_{\m}\nabla_{\n}R^{\m_1\dots\m_{n+1}}-\nabla_{\m}H_{\n}{}^{\m_1\dots\m_{n+1}}-(n+1)R^{\r[\m_1\dots\m_n}{\mc R}^{\m_{n+1}]}{}_{\n\r\m}\,. \label{basic poly in brst}
\eea 
In the following sections we explain how one can see those as generalised anchors and generalized torsion and basic curvature tensors on specific Lie $n$-algebroids.

\section{Twisted multibrackets on the cotangent bundle}
\label{sec3}

We briefly recall that a Lie algebroid is a triple $(E,b_2,\rho)$ of a vector bundle $E\overset{\pi}\longrightarrow M$ over a smooth manifold $M$, of a smooth bundle map $\rho:E\to TM$ that anchors sections $e\in \G(E)$ to vector fields on $M$ and of a skew-symmetric binary Lie bracket $b_2: \w^2\G(E)\to \G(E)$ (satisfying the  Jacobi identity) such that for a function $f\in C^{\infty}(M)$ it satisfies the Leibniz rule
\be \label{Lei LAoid}
b_2(e,fe')=f\,b_2(e,e')+L_{\rho(e)}f\, e'\,.
\ee 
The anchor map $\rho$ is, moreover, a homomorphism: 
\be \label{homoLAoid}
\rho(b_2(e,e')) =[\rho(e),\rho(e')]\,,
\ee 
where on the right hand side we encounter the ordinary Lie bracket of vector fields, a property that follows from the Jacobi identity and the Leibniz rule. Usually one uses the typical bracket notation{\footnote{Also in most of the literature the notation $\ell_2$ appears, with $b_2$ sometimes reserved for antialgebroids with degree suspension. We do not follow this convention here.}} $b_2(e,e')=[e,e']_{\E}$ but we shall refrain from doing so here in view of the fact that we are going to introduce higher $(n+1)$-ary brackets too and this notation will prove lighter. A Lie algebroid is transitive when the anchor map is fibrewise surjective. An almost Lie algebroid is a structure as above but such that in general the skew-symmetric bracket does not satisfy the Jacobi identity.   

In this paper we will primarily be interested in algebroid structures on the cotangent bundle $E=T^{\ast}M$ over some manifold $M$. A direct yet trivial example is the totally intransitive Lie algebroid where the bracket is identically zero for all pairs of 1-forms and the anchor sends every 1-form to the zero vector field. Even though uninspiring, this example can be useful in certain cases, when for instance one is interested in constructing examples of Lie bialgebroids and accordingly Courant algebroids, e.g. the so-called {standard} Courant algebroid. A more interesting and well known example arises when $M$ is a Poisson manifold with Poisson structure $\Pi\in\G(\w^2 TM)$. Then one can consider the anchor map induced by the Poisson structure, which we shall denote by the same symbol $\Pi: T^{\ast}M\to TM$, as long as no confusion arises.{\footnote{The notation $\Pi^{\sharp}$ is usually reserved for this map, but it should be clear from context when $\Pi$ is a 2-vector and when it is a map, so we will not insist on it.}} The Lie bracket on 1-forms is taken to be the Koszul bracket 
\be \label{KS}
b_2(e,e')={L}_{\Pi(e)}e'-{L}_{\Pi(e')}e-\dd (\Pi(e,e'))=\iota_{\Pi(e)}\dd e'-\iota_{\Pi(e')}\dd e+\dd(\Pi(e,e'))\,,
\ee  
where $\dd$ is the de Rham differential. This example can be directly generalized to the case when $M$ is not a genuine Poisson manifold but one twisted by a 3-form \cite{Severa Weinstein}. Recall that the triple $(M,\Pi,H)$ where $H\in\Omega_{\text{cl}}^3(M)$ (a closed 3-form) is a twisted Poisson manifold when 
\be \label{twisted Poisson}
\frac 12\, [\Pi,\Pi]_{\text{SN}}=\langle\Pi^{\otimes 3},H\rangle\,,
\ee 
where on the left hand side we encounter the Schouten-Nijenhuis bracket of multivector fields which  vanishes in the case of vanilla Poisson structure and $\Pi^{\otimes 3}$ is shorthand for $\Pi\otimes\Pi\otimes\Pi$ with the contractions understood in the first entries of the 2-vector; in local coordinates we obtain 
\be 
3 \Pi^{\m[\k}\partial_{\m}\Pi^{\l\n]}=\Pi^{\k'\k}\Pi^{\l'\l}\Pi^{\n'\n}H_{\k'\l'\n'}\,.
\ee 
We report here the local coordinate formula for the Schouten-Nijenhuis bracket, using the convention of \cite{Vaisman},
\bea \label{IzuV1} 
    [U,V]_{\text{SN}}\,&=
    \, & \frac{(-1)^{p}}{p!q!}\Big\{ q\,V^{\k\n_{2}...\n_{q}}\frac{\partial U^{\m_{1}...\m_{p}}}{\partial x^{\k}} \frac{\partial}{\partial x^{\m_{1}}} \wedge...\wedge \frac{\partial}{\partial x^{\m_{p}}} \wedge \frac{\partial}{\partial x^{\n_{2}}}\wedge ... \wedge \frac{\partial}{\partial x^{\n_{q}}} +
    \nn \\[4pt] &&\,   + \,(-1)^{p}p\,U^{\k\m_{2}...\m_{p}}\frac{\partial V^{\n_{1}...\n_{q}}}{\partial x^{\k}}\frac{\partial}{\partial x^{\m_{2}}}\wedge...\wedge \frac{\partial}{\partial x^{\m_{p}}}\wedge 
     \frac{\partial}{\partial x^{\n_{1}}}\wedge...\wedge \frac{\partial}{\partial x^{\n_{q}}}\Big\}\,,
\eea 
for $U\in \mf{X}^{p}(M)$ and $V\in \mf{X}^{q}(M)$ multivector fields of order $p$ and $q$ respectively, namely elements of the contravariant Grassmann algebra of $M$. For decomposable multivector fields  $U=X_{1}\wedge...\wedge X_{p}$ and $V=Y_{1}\w...\w Y_{q}$, a useful formula is
\be \label{IzuV2}
    [U,V]_{\text{SN}}= (-1)^{p+1}\sum_{r=1}^{p}\sum_{s=1}^{q}(-1)^{r+s}[X_{r},Y_{s}]\w U[r]\w V[s]\,,
\ee
where we introduced the notation $U[r]$ to mean the decomposable multivector with exclusion of the $r$-th entry.
Applying \eqref{IzuV2} to multivector fields of local form $N=\frac{1}{n!}N^{\m_{1}...\m_{n}}\frac{\partial}{\partial x^{\m_{1}}}\wedge...\wedge\frac{\partial}{\partial x^{\m_{n}}}$, yields \eqref{IzuV1}.

For vanishing $H$ the Lie algebroid structure on $T^{\ast}M$ was described above; for nonvanishing $H$ the Lie algebroid structure is again given with the anchor $\Pi$ and with the Lie bracket being the \emph{twisted} Koszul one: 
\be 
b_2^{\H}(e,e')=b_2(e,e')-H\big(\Pi(e),\Pi(e'),\cdot\big)\,.
\ee 
That this is a Lie algebroid follows from the defining relations \eqref{twisted Poisson} of a twisted Poisson structure. We emphasize that the 3-form twist that obstructs the Poisson-ness does not take us outside the category of Lie algebroids. This is in contrast to Courant algebroids, where a 4-form twist leads to the relaxed structure of a pre-Courant algebroid \cite{Vaisman:2004msa}.

In this work we shall be interested in structures on $T^{\ast}M$ that go beyond the above considerations. Specifically, we shall relax the assumption that $M$ is just a Poisson or twisted Poisson manifold and that the structure on its cotangent bundle is a Lie algebroid. This is motivated by the construction of the explicit realizations of higher-dimensional Hamiltonian mechanics that we described in Section \ref{sec2} and it will be achieved via the introduction of higher brackets. We proceed constructively and we shall return to the discussion of what is the precise structure on the manifold $M$ in due course. To define such structures we first consider the exterior power bundle of $n$-forms on the manifold $M$, which we denote as $E_n:=\w^n T^{\ast}M$. Its sections are differential forms of degree $n$ that we denote as $\hat{e}\in\G(E_n)=\Omega^{n}(M)$ for brevity.
Moreover, we promote $E_n$ to an anchored vector bundle $(E_n,R)$ by introducing the map 
\be 
R: E_n \to TM\,.
\ee 
With some abuse of notation as earlier, we can alternatively think of this map as a degree $n+1$ fully antisymmetric multivector field $R\in \mf{X}^{n+1}(M)\simeq\G(\w^{n+1}TM)$. Such objects appear naturally in the context of homotopy Poisson structures on manifolds, also called $P_{\infty}$ structures \cite{Voronov2,Voronov,Cattaneo:2005zz}. In fact much of our discussion in the present section parallels \cite{Voronov2}.

With the aid of the map $R$ introduced above, we define an $(n+1)$-ary bracket as a map $b_{n+1}: \w^n\G(E) \to \G(E)$ given as follows: 
 \bea \label{higherKoszul}
 b_{n+1}(e^{(1)},\dots, e^{(n+1)})=\sum_{r=1}^{n+1}(-1)^{n-r+1} {L}_{R({\hat e}[r])}e^{(r)}-\frac 1{(n-1)!}\, \dd\big(R(e^{(1)},\dots , e^{(n+1)})\big)\,, 
 \eea 
 where $e^{(r)}\in\G(T^{\ast}M)$ are 1-forms and $\hat e[r]\in \G(E_n)$ is the decomposable $n$-form given by the exclusion of the $r$-th entry in the wedge string of $e^{(r)}$s, namely{\footnote{The notation $[\cdot]$ should not be confused with the shift functor. It only applies to sections here.}} 
 \be \label{deco}
 \hat e[r]=e^{(1)}\w\dots e^{(r-1)}\w e^{(r+1)}\w\dots \w e^{(n+1)}\,.
 \ee  
In practice it is sometimes useful for computations to express the bracket in terms of interior products only, as we did for the binary one in Eq. \eqref{KS}, that is 
\be 
 b_{n+1}(e^{(1)},\dots, e^{(n+1)})=\sum_{r=1}^{n+1}(-1)^{n-r+1} \iota_{R({\hat e}[r])}\dd e^{(r)}+\frac 1{n!}\, \dd(R(e^{(1)},\dots , e^{(n+1)}))\,.
\ee 
As a technical remark, note the change in sign and factor of the second term on the right hand side in this alternative expression. This change in the factor is obviously not visible in the typical $n=1$ case.  In this form it is easy to see that in a local coordinate basis the multibracket becomes
\bea 
&& b_{n+1}(e^{(1)},\dots, e^{(n+1)})=\bigg(\sum_{r=1}^{n+1}\frac{(-1)^{n-r+1}}{n!}R^{\m_1\dots\m_{n+1}}e^{(1)}_{\m_1}\dots e^{(r-1)}_{\m_{r-1}}e^{(r+1)}_{\m_{r}}\dots e^{(n+1)}_{\m_n}\partial_{\m_{n+1}}e^{(r)}_{\l}+\nn\\[4pt] && \qquad\qquad\qquad \qquad\qquad\quad\quad +\frac 1{n!}\partial_{\l}R^{\m_1\dots\m_{n+1}}e^{(1)}_{\m_1}\dots e^{(n+1)}_{\m_{n+1}}\bigg)\dd x^{\l}\,,
\eea 
and as a result that the local coordinate formula
\be 
b_{n+1}(\dd x^{\m_1},\dots,\dd x^{\m_{n+1}})_{\m}=\frac 1{n!}\partial_{\m}R^{\m_1\dots\m_{n+1}}
\ee 
holds. We thus see that the derivative on the multivector field can be obtained as a structure constant of the multibracket. Of course we are still missing the components of the $(n+2)$-form $H$, which indicates that we must twist the higher bracket.

 Note that for $n=1$, in which case $E_{n=1}=T^{\ast}M$ and under the identification of the 2-vector $R$ with a Poisson structure $\Pi$, the bracket in \eqref{higherKoszul} reduces to the usual binary Koszul bracket of 1-forms appearing in Eq. \eqref{KS}. We note that in this case the cyclic sum of Lie derivatives contains the expected two terms of opposite sign. To appreciate the structure, let us jump to the next level and write down explicitly the form of the ternary bracket for $n=2$. It is 
 \be \label{higherKoszul3}
 b_3(e^{(1)},e^{(2)}, e^{(3)})={ L}_{R(e^{(2)}\w e^{(3)})}e^{(1)}+{ L}_{R(e^{(3)}\w e^{(1)})}e^{(2)}+{ L}_{R(e^{(1)}\w e^{(2)})}e^{(3)} - \dd\big(R(e^{(1)},e^{(2)} , e^{(3)})\big)\,, 
 \ee
with $R$ being an antisymmetric 3-vector and we observe again the cyclicity in the Lie derivatives.

It is straightforward to see that the $(n+1)$-ary Koszul bracket is antisymmetric in all its entries, i.e. 
\be 
b_{n+1}(e^{(1)},\dots,e^{(r)},\dots,e^{(s)},\dots,e^{(n+1)})=-\,b_{n+1}(e^{(1)},\dots,e^{(s)},\dots,e^{(r)},\dots,e^{(n+1)})\,.
\ee 
This follows directly from the definition. 
Furthermore, it is also straightforward to show that it is not $C^{\infty}(M)$-linear but instead it satisfies the Leibniz rule for functions $f\in C^{\infty}(M)$,
\be 
b_{n+1}(fe^{(1)},e^{(2)},\dots,e^{(n+1)})=f\,b_{n+1}(e^{(1)},e^{(2)},\dots, e^{(n+1)})+R(\dd f,e^{(2)},\dots, e^{(n+1)})\, e^{(1)}\,,
\ee 
extending to all other entries of the bracket by full antisymmetry. The second term in this Leibniz rule may as well be written in terms of the anchor map $R$, denoted by the same symbol, as $(-1)^{n}R(e^{(2)}\wedge \dots \wedge e^{(n+1)})f\, e^{(1)}$, or equivalently in terms of the Lie derivative,
\be 
b_{n+1}(fe^{(1)},e^{(2)},\dots,e^{(n+1)})=f\,b_{n+1}(e^{(1)},e^{(2)},\dots, e^{(n+1)})+(-1)^{n}L_{R(e^{(2)},\dots, e^{(n+1)})}f\, e^{(1)}\,.
\ee 
In this form it is simpler to see that for $n=1$ it reduces to the usual rule \eqref{Lei LAoid} for Lie algebroid brackets. 
 \begin{rmk}
Recall that the Poisson 2-vector $\Pi$ and the usual Poisson bracket of functions on phase space are related as 
 \be 
 \{f,g\}_{\text{P.B.}}=\Pi(\dd f,\dd g)\,
 \ee 
 and that the (untwisted) Koszul bracket is such that on exact 1-forms it is given by the de Rham differential of the Poisson bracket:
 \be 
 \dd\{f,g\}_{\text{P.B.}}=b_2(\dd f,\dd g)\,.
 \ee 
 The same logic holds true in the higher case, in particular given a $(n+1)$-vector $R$ we can define a higher Poisson bracket of functions on phase space as 
 \be 
 \{f_1,\dots, f_{n+1}\}_{\text{P.B.n}}=R(\dd f_1,\dots \dd f_{n+1})\,,
 \ee 
  and moreover the $(n+1)$-ary Koszul bracket is related to this higher Poisson bracket as 
  \be 
  \dd\{f_1,\dots,f_{n+1}\}_{\text{P.B.n}}=b_{n+1}(\dd f_1,\dots,\dd f_{n+1})\,,
  \ee 
  which is the direct generalization of the $n=1$ case. One can give a derived bracket perspective to these relations in the context of homotopy Poisson structures \cite{Voronov}. We also note that in the terminology of \cite{Ibanez}, this is an almost Poisson bracket of order $n+1$, the multivector $R$ is an almost Poisson $(n+1)-$tensor and the manifold $(M,R)$ is a generalized almost Poisson manifold. 
 \end{rmk}

 There are two general ways that one can make the above ``almost'' structure more specific. The first is to impose the additional condition  
 \be  
[R,R]_{\text{SN}}=0\,.
 \ee  
 Taking into account the antisymmetry of the Schouten-Nijenhuis bracket, this condition is vacuous (trivially satisfied) for odd $n+1$ but contentful for even $n+1$. We nevertheless refer to both cases as generalized Poisson. When written in terms of the almost Poisson bracket of order $n+1$ this condition is identical to a completely cyclic fundamental identity (which replaces the Jacobi identity and it reduces to it for $n=1$). The second option is to impose a different fundamental identity, one that turns the corresponding Hamiltonian vector field into a derivation of the algebra of functions generated by the almost Poisson bracket of order $n+1$. This is what is called a Nambu-Poisson structure and it has been used in a different way than here to construct actions for higher dimensional branes \cite{Jurco:2012gc}. In fact, in most of the ensuing we shall deal with general almost Poisson structures of order $n+1$ in the above sense; therefore we do not present more details on these special cases here. We just note that there exists a notion of $n$-Lie algebroid, generalizing $n$-Lie algebras which are relevant when $M$ is a point. 

 One can now ask whether the higher bracket $b_{n+1}$ can be twisted by higher degree closed differential forms, as was the case for the binary bracket of the Lie algebroid on the twisted Poisson manifold. One option is to consider the $(n+1)$-ary bracket as twisted through ``its own'' structural quantity $R$. However, $R$ acts on an $n$-form to produce a vector field which should in turn be acted upon by a 2-form to recreate a 1-form as a result. This may be achieved by considering that the base manifold $M$ has torsion. In other words that it is equipped with an affine connection $\nabla$ with torsion $T$, which is a vector valued 2-form.{\footnote{This is not what will be useful for our purposes, since as the reader recalls from Section \ref{sec2} the affine connection we considered there does not have torsion. However, it is useful to discuss this option in order to make contact with the concept of twisted Poisson structure introduced earlier in this section.}} Then it is possible to define the following first version of twisted $(n+1)$-ary bracket, which we call the torsion-twisted (or in brief $T$-twisted) $(n+1)$-ary Koszul bracket: 
\be 
b_{n+1}^{\textit{\tiny{T}}}=b_{n+1}-\mathrm{Tw}_{n+1}\,,
\ee 
where the 1-form twist is defined as 
\be 
 \mathrm{Tw}_{n+1}(e^{(1)},\dots,e^{(n+1)})=\frac 1{(n+1)!} \sum_{r=1}^{n+1}(-1)^{n-r}\langle e^{(r)},T\rangle(R(\hat{e}[r]),\cdot)\,,\label{twist}
 \ee 
 and the angle brackets denote the canonical pairing between the tangent and cotangent bundles on $M$.
 \begin{rmk}\label{Tw n=1}
 For $n=1$ and in case of a twisted Poisson manifold $(M,\Pi,H)$ with a closed 3-form $H$, one may consider the connection $\nabla$ with torsion 
 \be 
 T=\langle\Pi,H\rangle\,.
 \ee 
 Then the twist \eqref{twist} reduces to 
 \be 
\mathrm{Tw}_2(e^{(1)},e^{(2)})=H\big(\Pi(e^{(1)}),\Pi(e^{(2)}),\cdot\big)\,,
 \ee 
 which is the usual one of the 3-form-twisted (binary) Koszul bracket. In other words, we find that with the above definitions there is an identification $b_2^{\H}=b_2^{\textit{\tiny{T}}}$. 
 \end{rmk}

 The above twist is not, however, the only option when additional structures are present alongside the bracket's ``own'' structure $R$. Suppose for example that we have a manifold equipped with both a 2-vector $\Pi$ and an $(n+1)$-vector $R$ with $n\ge 1$ and moreover it is endowed with a closed $(n+2)$-form $H\in \Omega_{\text{cl}}^{n+2}(M)$. This is clearly the case that corresponds to the field theories of Section \ref{sec2}. Consider, moreover, that $\Pi$ is a genuine Poisson 2-vector. Then on $T^{\ast}M$ we can define both binary and $(n+1)$-ary Koszul brackets. The binary bracket will be the untwisted one---unless of course one introduces an additional closed 3-form, which we do not do. However, the $(n+1)$-ary bracket can now be twisted even without any torsion on the manifold $M$. 
 This $(n+2)$-form-twisted (or in brief $H$-twisted) $(n+1)$-ary Koszul bracket is defined as 
     \be  
b_{n+1}^{\H}=b_{n+1}-\frac{1} {n!}H_{n+1}\,,
     \ee  
     where the 1-form twist is given by 
     \be \label{Hn+1}
H_{n+1}(e^{(1)},\dots,e^{(n+1)})=H\big(\Pi(e^{(1)}),\dots,\Pi(e^{(n+1)}),\cdot\big)\,.
     \ee 
     We observe that for the choice of connection with torsion on $M$ as in Remark \ref{Tw n=1}, it holds that $Tw_2=H_2$, which also justifies the notation since then $b_2^{\H}=b_2^{\textit{\tiny{T}}}$. The two twists cannot agree for any other value of $n>1$.
 For future reference and to keep a record of our conventions we also give the local coordinate form of the $H$-twisted higher bracket:
 \be 
b_{n+1}^{\H}(\dd x^{\m_1},\dots,\dd x^{\m_{n+1}})_{\m}=\frac 1{n!}(\partial_{\m}R^{\m_1\dots \m_{n+1}}-H_{\m}{}^{\m_1\dots\m_{n+1}})\,.
 \ee 
We observe that the structure constants of the $H$-twisted bracket generate one of the terms we encountered in the non manifestly covariant expressions of the  gauge theory discussed in Section \ref{sec2}.

 So far, the only properties we have mentioned for the $(n+1)$-ary bracket are skew-symmetry and the Leibniz rule. Much like the binary one, it might satisfy some Jacobi-like identity and some property analogous to the anchor $\Pi$ being a homomorphism. In the next section we will study precisely these properties. At this stage we can be more specific and focus hence on the twisted R-Poisson structure, introduced in \cite{Thanasis R-Poisson}. A twisted R-Poisson manifold is the quadruple $(M,\Pi,R,H)$ of structures as they appear above with the condition 
\be 
[\Pi,R]=(-1)^{n+1}\langle\Pi^{\otimes(n+2)},H\rangle\,.
\ee
When $H=0$ we call it an R-Poisson manifold and if in addition $R=0$ it becomes a genuine Poisson manifold. To avoid confusion we note that a twisted Poisson manifold in the sense of \v{S}evera and Weinstein is not a special case of twisted R-Poisson manifold for any choice of structural data.{\footnote{It is however a special case of \emph{bi-twisted} R-Poisson manifold \cite{Thanasis R-Poisson}.}} 

\paragraph{Higher Schouten-Nijenhuis brackets.} 

To complete our analysis in this section, it is useful to recall that given a bracket on $E$ we can always equip $E_n$ with the natural binary bracket that generalizes the Schouten-Nijenhuis bracket of multivectors and turns the space of multisections into a Gesternhaber algebra \cite{MX}. To keep the notation minimal we again denote this bracket as $b_2:\G(E_m)\times \G(E_n)\to \G(E_{m+n-1})$. For $\hat e \in \G(E_m)$ and $\hat e'\in\G(E_n)$ it is given as 
\be  
b_2(\hat e,\hat e')=\sum_{r,s}(-1)^{r+s}b_2(e^{(r)},e'^{(s)})\w \hat{e}[r]\w\hat{e}'[s]\,.
\ee 
It satisfies graded antisymmetry, graded derivation rule and graded Jacobi identity as laid out in section 2 of Ref. \cite{MX} or Chapter 7 of \cite{Mackenzie}, and it agrees with the binary bracket on $E$ and with the anchor $\rho$ when one of the entries is a function of $C^{\infty}(M)$. 
For example the graded derivation rule reads
\be 
b_2(\hat{e},\hat{e}'\w\hat{e}'')=b_2(\hat{e},\hat{e}')\w\hat{e}''+(-1)^{(\text{deg}(\hat{e})-1)\text{deg}(\hat{e}')}\hat{e}'\w b_2(\hat{e},\hat{e''})\,.
\ee 
The Leibniz rule for functions $f\in C^{\infty}(M)$ according to the above formula reads
\be 
b_2(\hat{e},f\hat{e}')=f\, b_2(\hat{e},\hat{e}')+L_{\rho(X)}f \,\hat{e}'\,,
\ee 
exactly as the original binary bracket on sections of $E$.
The Schouten bracket permeates the full cotangent complex generated by the differential $\dd_{E}$, which is defined by the usual Koszul formula like the de Rham differential. 

In the context of our discussion, there is more to the Schouten algebra than the binary bracket given above. There is also the option of taking the $(n+1)$-ary bracket for multisections. Aligning with our previous notation, we write this bracket 
\be
b_{n+1}:\G(E_{m_1})\times\dots \G(E_{m_{n+1}})\to \G(E_{\bar{m}})\,,\quad \bar{m}=\sum\limits_{i=1}^{n+1}m_i-n\,,
\ee 
through the following formula
\be \label{Schouten n+1}
b_{n+1}(\hat{e}^{(1)},\dots,\hat{e}^{(n+1)})=\sum_{r_i}(-1)^{p}b_{n+1}(e^{{(1)}^{(r_1)}},\dots,e^{{(n+1)}^{(r_{n+1})}})\w\hat{e}^{(1)}[r_1]\w\dots\w\hat{e}^{(n+1)}[r_{n+1}]\,,
\ee 
where 
\be 
p=n+1+\sum\limits_{i=1}^{n+1}r_i\,,
\ee 
in terms of the rather evident notation that the decomposable $n$-form $\hat{e}^{(1)}$ is given as $\hat{e}^{(1)}=e^{{(1)}^{(1)}}\w e^{{(1)}^{(2)}}\w\dots\w e^{{(1)}^{(m_1)}}$ 
and so on. Note that the $(n+1)$-ary Schouten-Nijenhuis bracket of an $n$-form and $n$ 1-forms returns an $(n+1)$-form. In general, this $(n+1)$-ary  bracket has degree $-n$, generalizing the degree $-1$ binary Schouten-Nijenhuis bracket. Note that for a function $f\in C^{\infty}(M)$ it holds that 
\bea 
&& b_{n+1}(f\hat{e}^{(1)},\dots,\hat{e}^{(n+1)})=f\, b_{n+1}(\hat{e}^{(1)},\dots,\hat{e}^{(n+1)})+ \nn\\[4pt]   &+& \,\sum_{r_i[1]}(-1)^{p[1]+1+n}R(e^{{(2)}^{(r_2)}}\w\dots\w e^{{(n+1)}^{(r_{n+1})}})f \,\hat{e}^{(1)}\w\hat{e}^{(2)}[r_2]\w\dots\w\hat{e}^{(n+1)}[r_{n+1}]\,,\nn\\
\eea 
where the sum is over all $r_i$ apart from $r_1$ and similarly $p[1]$ means that the sum in $p$ starts from 2 instead of 1. We observe that there is more to the Leibniz rule in the $(n+1)$-ary Schouten bracket than the original one except of course in the case $n=1$ when all sections are just forms on $E_1$ and the extra sum is trivial.

\section{Gapped (almost) Lie n-algebroids}\label{sec4}

 We already discussed that given a Poisson or twisted Poisson manifold there exists a Lie algebroid on its cotangent bundle. This begs the question: \emph{how can we describe a twisted R-Poisson manifold as an algebroid structure?} To answer this question,\footnote{See also \cite{Ikeda:2023vfq} for a different, alternative analysis.} first we find that the following identity holds: 
 \bea  
&& \Pi\big(b_{n+1}^{\H}(e^{(1)},\dots,e^{(n+1)})\big)+\sum_{r=1}^{n+1}(-1)^{n+1}R\big(b_2(e^{(r)},e^{(r+1)})\w \hat{e}[r,r+1]\big)= \nn\\ && \qquad = \sum_{r=1}^{n+1}(-1)^{n+1-r}[R(\hat e[r]),\Pi(e^{(r)})]\,, \label{mixed morphism}
 \eea 
 where $\hat{e}[r,r+1]=(\hat{e}[r])[r+1]$, establishing a relation between the binary and $(n+1)$-ary $H$-twisted Koszul brackets and the Lie bracket of vector fields by means of the maps $\Pi$ and $R$ that act on sections of $E_1$ and $E_n$ respectively and return a vector field on $M$. 
 Since we have left the almost Poisson structure $R$ general and not further specified, there is no additional Jacobi or fundamental identity for the $(n+1)$-ary bracket alone, at least for even $n$. However, we can search for one that involves both $b_2$ and $b_{n+1}$ for every $n$, even or odd. In the case of vanishing $H$, this mixed Jacobi identity reads 
 \be\label{mixed jacobi}
\text{Alt}\bigg(b_2\big(b_{n+1}(e^{(1)},\dots,e^{(n+1)}),e^{(n+2)}\big)+(-1)^{n+1} b_{n+1}\big(b_2(e^{(1)},e^{(2)}),e^{(3)},\dots,e^{(n+2)}\big)\bigg)=0\,,
 \ee
 where $\text{Alt}$ denotes antisymmetrization with weight 1 in all $n+2$ entries.  Notably, this identity is \emph{not} satisfied for $b_{n+1}^{\H}$, a fact that points to the existence of an almost structure, similar to the definition of an almost Lie algebroid that we mentioned earlier.
To reach a final conclusion we should examine whether property \eqref{mixed morphism} follows from the Leibniz rules for the brackets $b_2$ and $b_{n+1}$ together with the Jacobi identity for $b_2$ and the mixed Jacobi identity \eqref{mixed jacobi}. 
Anticipating that this is indeed the case and to also include bundles other than the cotangent we first give the following general definition. 
\begin{defn}\label{gapped def}
    A \textbf{gapped Lie n-algebroid} for $n>1$ is the quintet $(E,b_2,b_{n+1},\rho_2,\rho_{n+1})$ consisting of a vector bundle $E$ over a smooth manifold $M$, two smooth maps $\rho_2: E\to TM$ and $\rho_{n+1}: E_n:=\w^nE\to TM$ and two skew-symmetric operations on the space of sections $\G(E)$, one binary Lie bracket $b_2$ and one $(n+1)$-ary bracket $b_{n+1}$ that satisfy the Leibniz rules 
\bea 
\label{Lei 1} b_2(e,fe')&=& f b_2(e,e')+\rho_{2}(e)f\,e'\,, 
\\[4pt] 
\label{Lei 2} b_{n+1}(fe,e^{(1)},\dots,e^{(n)})&=&fb_{n+1}(e,e^{(1)},\dots,e^{(n)})+(-1)^{n}\rho_{n+1}(e^{(1)}\w\dots\w e^{(n)})f\, e\,,\,\,\,\,\,\,\quad
\eea 
for $f\in C^{\infty}(M)$ and the mixed Jacobi identity \eqref{mixed jacobi}. If instead the identity \eqref{mixed jacobi} is violated, then we speak of a \textbf{gapped almost Lie n-algebroid} if the mixed morphism condition \eqref{mixed morphism} (understood with the replacements $\Pi\to \r_2$ and $R\to \r_{n+1}$) is satisfied. We call the number $n-2$ the gap of a gapped Lie n-algebroid.
\end{defn}
Note that our definition of a gapped Lie $n$-algebroid is a special case of Definition 1 for a $L_{\infty}$ algebroid in \cite{Voronov2}. Indeed, in that work Th. Voronov demonstrates that the cotangent bundle equipped with a series of higher Koszul brackets is an $L_{\infty}$ algebroid. Here we also included the twist that gives rise to the almost structure. The usefulness of the concept for our purposes becomes transparent through the following example.
\begin{exa}
\label{example of gapped}
A nontrivial example of a gapped almost Lie n-algebroid is given by a twisted R-Poisson manifold, with $b_2$ and $b_{n+1}$ the binary and twisted $(n+1)$-ary Koszul brackets, $\rho_2=\Pi$ and $\rho_{n+1}=R$. We call this structure a \textbf{twisted R-Poisson $n$-algebroid}. In the untwisted case, i.e. when the $(n+2)$-form $H$ vanishes, the corresponding R-Poisson $n$-algebroid is an example of a gapped Lie n-algebroid. 
When $R=0=H$ the twisted R-Poisson $n$-algebroid reduces to the usual Poisson Lie algebroid structure on the cotangent bundle.
\end{exa}
  Let us make two important remarks. First, a gapped almost Lie $n$-algebroid contains a substructure $(E,b_2,\rho_2)$ which is a genuine Lie algebroid. To highlight this we call this the \textbf{ground Lie algebroid} of a gapped almost Lie $n$-algebroid. In the twisted R-Poisson example, the ground Lie algebroid is obviously the usual Lie algebroid on a Poisson manifold described earlier. 
Second, as already mentioned a gapped Lie n-algebroid is a Lie n-algebroid in the sense of Definition 1 in Ref. \cite{Voronov2}  with all brackets in the gap (from the 3-bracket to the $n$-bracket) vanishing, unless of course $n=2$ when the gap is absent. This justifies its name. On the other hand there exist different constructions in the literature that provide a one-to-one correspondence between homological vector fields and (split) $L_{\infty}$ algebroids defined with the requirement that all brackets higher than the binary one are $C^{\infty}(M)$-linear \cite{SZ,BP,LGLS}---see however \cite{herbig}. Although in the present case we also have a homological vector field associated to a gapped Lie $n$-algebroid, as we will  discuss in more details in the ensuing, and therefore there does exist a construction of derived brackets that yield a $C^{\infty}(M)$-linear $(n+1)$-ary bracket, the construction with the higher Koszul brackets that we use in this paper  is different and it also yields an $L_{\infty}$ algebroid with non $C^{\infty}(M)$-linear brackets in the same way as in \cite{Voronov2}.   

Due to the ground Lie algebroid structure, the map $\rho_2=\Pi$ of a twisted R-Poisson $n$-algebroid is a homomorphism for the Lie algebra structures on the space of sections of the cotangent and tangent bundles given by $b_2$ and the Lie bracket of vector fields; it satisfies \eqref{homoLAoid}. An additional property of the twisted R-Poisson $n$-algebroid, as announced earlier, is captured by the following:
\begin{prop}
The anchor maps $\Pi$ and $R$ of a twisted R-Poisson $n$-algebroid satisfy identity \eqref{mixed morphism} and, more generally, for any gapped Lie $n$-algebroid with the replacements $\rho_2$ and $\rho_{n+1}$ for $\Pi$ and $R$ respectively, identity \eqref{mixed morphism}  follows from Definition \ref{gapped def}.
\end{prop}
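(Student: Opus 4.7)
The plan is to handle the two clauses of the proposition by different techniques: for a genuine gapped Lie $n$-algebroid, which by Definition~\ref{gapped def} satisfies the full mixed Jacobi identity \eqref{mixed jacobi}, I would derive \eqref{mixed morphism} by a standard $C^{\infty}(M)$-linearity argument applied to Jacobi; for the twisted R-Poisson case, where mixed Jacobi is violated by the $H$-twist, I would fall back on a direct coordinate computation using the twisted R-Poisson condition itself.

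For the general gapped Lie $n$-algebroid claim, I would evaluate the mixed Jacobi identity \eqref{mixed jacobi} on the $(n+2)$-tuple $(e^{(1)},\dots,e^{(n+1)},f e^{(n+2)})$ for an arbitrary $f\in C^{\infty}(M)$ and subtract $f$ times the same identity on $(e^{(1)},\dots,e^{(n+2)})$. Both expressions vanish, so their difference must vanish as well. Expanding each bracket in the difference via the Leibniz rules \eqref{Lei 1} and \eqref{Lei 2}, together with their natural extension to any entry by antisymmetry, the $f$-proportional contributions cancel identically, while the remainder collects into derivations of $f$ of three types: $\rho_{2}(b_{n+1}(\cdots))(f)$, $\rho_{n+1}(b_{2}(e^{(r)},e^{(r+1)})\wedge \hat{e}[r,r+1])(f)$, and the iterated derivations $\rho_{2}(e^{(r)})\rho_{n+1}(\hat{e}[r])(f)-\rho_{n+1}(\hat{e}[r])\rho_{2}(e^{(r)})(f)=[\rho_{n+1}(\hat{e}[r]),\rho_{2}(e^{(r)})](f)$. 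Since $f$ and $e^{(n+2)}$ are arbitrary, and $M$ admits functions with any prescribed 1-jet at each point, the overall coefficient -- viewed as a vector field acting on $f$ -- must vanish, which is exactly the content of \eqref{mixed morphism} in the untwisted case $H=0$.

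For the twisted R-Poisson case I would verify \eqref{mixed morphism} by direct calculation in local coordinates. Substituting the explicit expressions for $b_{2}$ and $b_{n+1}^{\H}$ into the left-hand side produces contributions of the three kinds $\Pi^{\kappa\mu}\partial_{\kappa}R^{\cdots}$, $R^{\cdots}\partial_{\kappa}\Pi^{\mu\nu}$, and the $H$-twist $\langle \Pi^{\otimes(n+1)},H\rangle$. The right-hand side, being a sum of ordinary Lie brackets of vector fields, reproduces precisely the local coordinate expression for $[\Pi,R]_{\mathrm{SN}}$ contracted with the $n+1$ one-forms, as one sees by applying the decomposable-multivector formula \eqref{IzuV2} to $\Pi$ and $R$. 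Cross-derivatives of $\Pi$ drop out by the Poisson condition $[\Pi,\Pi]_{\mathrm{SN}}=0$, and the residual mismatch between the two sides is exactly $[\Pi,R]_{\mathrm{SN}}$ paired with the one-forms $e^{(1)},\dots,e^{(n+1)}$; by the twisted R-Poisson hypothesis this equals $(-1)^{n+1}\langle \Pi^{\otimes(n+2)},H\rangle$ paired with the same one-forms, and this cancels the $H$-twist coming from $b_{n+1}^{\H}$ on the left.

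The hard part will be the combinatorial bookkeeping in the twisted clause: aligning the signs $(-1)^{n+1-r}$, keeping track of the wedge permutations encoded in $\hat{e}[r]$ and $\hat{e}[r,r+1]$, and reconciling the two distinct contraction conventions used for $\Pi$ and $R$. The cleanest way to manage this, and to avoid pages of index manipulation, is to phrase both sides of \eqref{mixed morphism} intrinsically as contractions of the Gerstenhaber--Schouten bracket structure on multivectors against the $(n+1)$-tuple of one-forms, so that the verification reduces to a single invocation of the twisted R-Poisson defining condition rather than matching terms index by index.
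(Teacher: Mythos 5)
Your argument for the second clause is essentially the paper's own proof: the paper likewise starts from the Leibniz rules \eqref{Lei 1}--\eqref{Lei 2}, inserts a function $f$ into one slot, substitutes the mixed Jacobi identity \eqref{mixed jacobi}, and then sorts the resulting terms into (i) $f$ times the Jacobi identity, (ii) mutually cancelling derivation terms, (iii) the $\rho_{n+1}(b_2(\cdot,\cdot)\wedge\cdots)f$ terms, and (iv) the quadratic-in-anchors terms that assemble into the commutator of vector fields. Your version, which applies \eqref{mixed jacobi} to $(e^{(1)},\dots,e^{(n+1)},fe^{(n+2)})$ and subtracts $f$ times the identity, is the same computation organized from the other end; the bookkeeping of which derivation-of-$f$ terms multiply $e^{(n+2)}$ versus the other $e^{(j)}$ is exactly the paper's step (ii), so no new difficulty arises.

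Where you genuinely diverge is in treating the first clause separately. You correctly observe that a twisted R-Poisson $n$-algebroid is only a gapped \emph{almost} Lie $n$-algebroid, so \eqref{mixed jacobi} fails for $b_{n+1}^{\H}$ and the Leibniz-plus-Jacobi derivation cannot be invoked; you therefore propose a direct verification of \eqref{mixed morphism} from the twisted R-Poisson condition $[\Pi,R]_{\mathrm{SN}}=(-1)^{n+1}\langle\Pi^{\otimes(n+2)},H\rangle$, matching the two sides against the Schouten bracket via the decomposable formula \eqref{IzuV2} and letting the $H$-term cancel the twist \eqref{Hn+1}. The paper's written proof covers only the implication from Definition \ref{gapped def}; the twisted case is asserted at the opening of Section \ref{sec4} (``first we find that the following identity holds'') without a displayed computation, and indeed for a gapped almost Lie $n$-algebroid \eqref{mixed morphism} is taken as part of the definition. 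So your split buys a self-contained justification of the first clause that the paper leaves implicit. Two small cautions on your sketch: the Poisson condition $[\Pi,\Pi]_{\mathrm{SN}}=0$ is not what makes the $\partial\Pi$ terms match --- those are accounted for by the $R\,\partial\Pi$ half of $[\Pi,R]_{\mathrm{SN}}$ together with the $b_2$ terms on the left of \eqref{mixed morphism} --- and after establishing the identity on exact $1$-forms you still need the $C^{\infty}(M)$-linearity of the difference of the two sides (which follows from the Leibniz rules for $b_2$ and $b_{n+1}^{\H}$) to extend it to arbitrary sections.
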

\begin{proof} It is sufficient to show that the mixed morphism property \eqref{mixed morphism} follows from the Leibniz rules and the Jacobi identities, which are part of the definition of a gapped Lie $n$-algebroid. In other words to perform the analogous computation that for Lie algebroids demonstrates that the anchor being a homomorphism follows from the Leibniz rule and the Jacobi identity for the Lie bracket on its space of sections \cite{KS}. First we use the Leibniz rules \eqref{Lei 1} and \eqref{Lei 2} to directly write 
\bea 
\rho_2\big(b_{n+1}(e^{(1)},\dots,e^{(n+1)})\big)f\,e^{(n+2)}&=&b_2\big(b_{n+1}(e^{(1)},\dots,e^{(n+1)}),fe^{(n+2)}\big)- \nn\\[4pt] 
&& \qquad - \, f\,b_2\big(b_{n+1}(e^{(1)},\dots,e^{(n+1)}),e^{(n+2)}\big)\,,\label{Lei 1 b}
\eea 
and, respectively, for $r=1,\dots,n+1$, 
\bea 
&& 
(-1)^n\rho_{n+1}\big(b_2(e^{(r)},e^{(r+1)})\w\hat{e}[r,r+1]\big)f\,e^{(n+2)} = \nn\\[4pt] 
&& \qquad\qquad\qquad b_{n+1}\big(f e^{(n+2)},b_2(e^{(r)},e^{(r+1)}),e^{(1)},\dots,e^{(n+1)}\big)- \nn\\[4pt] 
&& \qquad \qquad \qquad -\, f\,b_{n+1}\big(e^{(n+2)},b_2(e^{(r)},e^{(r+1)}),e^{(1)},\dots,e^{(n+1)}\big)\,,
\eea 
where the $r$-th and $(r+1)$-st entries in the ellipses of the terms on the right hand side of this formula are obviously excluded. In fact, the second formula is redundant for the proof that follows, but it is useful to have it on record nevertheless. We can then start from \eqref{Lei 1 b} and compute, using identity \eqref{mixed jacobi}:
\bea 
&& \rho_2\big(b_{n+1}(e^{(1)},\dots,e^{(n+1)})\big)f\, e^{(n+2)}
= \nn\\[4pt] 
&&   \qquad =(-1)^nb_2\big(b_{n+1}(fe^{(n+2)},e^{(1)},\dots,e^{(n)}),e^{(n+1)}\big) \nn\\[4pt] 
&& \qquad -\, b_2\big(b_{n+1}(e^{(n+1)},f e^{(n+2)},e^{(1)},\dots, e^{(n-1)}),e^{(n)}\big) +\dots \nn\\[4pt] && 
\qquad \dots +\, (-1)^{n}b_2\big(b_{n+1}(e^{(2)},\dots,f e^{(n+2)}),e^{(1)}\big) \nn\\[4pt]
 && \qquad
+\,(-1)^{n}b_{n+1}\big(b_2(e^{(1)},e^{(2)}),e^{(3)},\dots, f e^{(n+2)}\big) \nn\\[4pt] 
&& \qquad -\, b_{n+1}\big(b_2(f e^{(n+2)},e^{(1)}),e^{(2)},\dots, f e^{(n+1)}\big)+\dots \nn\\[4pt] 
&& \qquad \dots - \,b_{n+1}\big(b_2(e^{(2)},e^{(3)}),e^{(4)}
\dots,f e^{(n+2)},e^{(1)}\big)\nn\\[4pt] 
&& \qquad -f b_2\big(b_{n+1}(e^{(1)},\dots,e^{(n+1)}),e^{(n+2)}\big)\,,
\eea 
where the ellipses run over the cyclic permutations of the entries.
The next step is to use the Leibniz rules in all terms, in some cases once and in some cases twice. Schematically, for terms of the form $b_{n+1}(b_2(e,e),fe,\dots,e)$ we use the Leibniz rule \eqref{Lei 2}, whereas for terms of the form $b_2(b_{n+1}(fe,e,\dots,e),e)$ or $b_{n+1}(b_2(e,fe),e,\dots e)$ we first use the corresponding Leibniz rule for the inner bracket and then the other Leibniz rule for the outer bracket. Performing this task, we end up with the following classes of terms: (i) $f$ multiplying all the terms in identity \eqref{mixed jacobi}, thus vanishing, (ii) terms of the form $\r_2(e)f\, b_{n+1}(e,\dots,e)$ and $\r_{n+1}(e\w\dots\w e)f\,b_2(e,e)$ that all cancel among themselves, (iii) terms of the form $\rho_{n+1}(b_2(e,e)\w\dots\w e)f\,e$ which collect into the second term on the left hand side of \eqref{mixed morphism}, and (iv) terms quadratic in anchors, namely $\rho_2(e)\rho_{n+1}(e\w\dots\w e)f \,e^{(n+2)}$ and their antisymmetric partners $-\rho_{n+1}(e\w\dots\w e)\rho_2(e)f \,e^{(n+2)}$ that collect into the Lie derivative of vector fields on the right hand side of \eqref{mixed morphism}.
\end{proof}

A gapped Lie $n$-algebroid may also be described as an NQ$n$ manifold. 
The conditions of Definition \ref{gapped def} may be encoded in a homological vector field by considering  dg manifolds of degree $n$ arising from double vector bundles $(\widehat{E}=T[n-1]E[1], Q^{\Eh})$. In local coordinates $(x^{\m},a^{a},\dot{x}^{\m},\dot{a}^{a})$ of respective degrees $(0,1,n-1,n)$, the homological vector field is of the form
    \bea 
Q^{\widehat\E}&=&Q^{\E}+\big(\rho_{a}{}^{\m}\dot{a}^{a}+\partial_{\n}\rho_{a}{}^{\m}a^{a}\dot{x}^{\n}+\frac{1}{n!}\rho_{a_1\dots a_{n}}{}^{\m}a^{a_1}\dots a^{a_{n}}\big)\frac{\partial}{\partial \dot{x}^{\m}} \nn\\[4pt] 
&&\quad -\,\big(C^{a}_{bc}a^{b}\dot{a}^{c}+\frac 12 \partial_{\m}C^{a}_{bc}\dot{x}^{\m}a^{b}a^{c}+\frac 1{(n+1)!}C_{a_1\dots a_{n+1}}^{a}a^{a_1}\dots a^{a_{n+1}}\big)\frac{\partial}{\partial \dot{a}^{a}}\,,
    \eea 
where $Q^{\E}$ is the homological vector field of the ground Lie algebroid. 

 The correspondence follows via the identifications of the various components of the vector field with the coefficients of the structural data of a gapped Lie $n$-algebroid, specifically 
\be 
\rho_2=(\rho_{a}{}^{\m})\,,\quad \rho_{n+1}=(\rho_{a_1\dots a_n}{}^{\m})\,, \quad b_2=(C_{ab}^{c})\,,\quad b_{n+1}=(C_{a_1\dots a_{n+1}}^{a})\,.
\ee 
 To prove the assertion we identify the conditions that follow from the vanishing of the square of the vector field with the local form of the defining conditions of a gapped Lie $n$-algebroid according to Definition \ref{gapped def}. We also highlight already that the derivatives of the anchor components and of the binary structure constants appear in the homological vector field. The latter are related to the local coordinate expression for the basic curvature (see Section \ref{sec6} for its precise definition). The part of the proof that refers to the ground Lie algebroid is obvious (it is exactly the same as for any Lie algebroid) and we omit it. According to these, we first find:
 \bea 
\big(Q^{\widehat{\E}}\big)^2\dot{x}^{\m}&=&a^b\dot{a}^c\bigg(-\rho_{a}{}^{\m}C^a_{bc}+\rho_b{}^{\n}\partial_{\n}\rho_c{}^{\m}-\rho_c{}^{\n}\partial_{\n}\rho_b{}^{\m}\bigg) 
\nn\\[4pt] 
&+&  a^{a}a^{b}\dot{x}^{\n}\bigg(\rho_a{}^{\k}\partial_{\k}\partial_{\n}\rho_{b}{}^{\m}-\frac 12 C_{ab}^{c}\partial_{\n}\rho_c{}^{\m}+\partial_{\n}\rho_{a}{}^{\k}\partial_{\k}\rho_b{}^{\m}-\frac 12 \rho_c{}^{\m}\partial_{\n}C_{ab}^{c}\bigg) 
\nn\\[4pt] 
&+&  a^{a_1}\dots a^{a_{n+1}}\bigg(-\frac 1{(n+1)!}\rho_a{}^{\m}C^a_{a_1\dots a_{n+1}}+\frac{(-1)^n}{n!}\rho_{a_{n+1}}{}^{\n}\partial_{\n}\rho_{a_1\dots a_n}{}^{\m} \nn\\[4pt] && \qquad\quad  +\, \frac{(-1)^n}{(n-1)!}C_{a_1 a_{n+1}}^a\rho_{a a_2\dots a_n}{}^{\m}-\frac{(-1)^{n}}{n!}\rho_{a_1\dots a_n}{}^{\n}\partial_{\n}\rho_{a_{n+1}}{}^{\m}\bigg)\,.
 \eea 
 The first and second parentheses vanish due to the ground Lie algebroid relations, the first being the condition that the anchor $\r_2$ is a homomorphism and the second being the derivative of the first. Vanishing of the third parentheses, obviously taking into account the manifest antisymmetrization of indices, is the local coordinate expression of the mixed morphism condition \eqref{mixed morphism} with the replacement of $\Pi$ by $\rho_2$ and of $R$ by $\r_{n+1}$ as before. Finally we find 
 \bea 
\big(Q^{\widehat{\E}}\big)^2\dot{a}^{a}&=&a^{b}a^{c}\dot{a}^{d}\bigg(-\rho_b{}^{\m}\partial_{\m}C^{a}_{cd}-\frac 12 \rho_{d}{}^{\m}\partial_{\m}C^{a}_{bc}-\frac 12 C^{e}_{bc}C^{a}_{de}-C^{e}_{db}C^{a}_{ce}\bigg) \nn\\[4pt] 
&+&a^ba^ca^d\dot{x}^{\n}\bigg(-\frac 12 \rho_b{}^{\m}\partial_{\m}\partial_{\n}C^a_{cd}-\frac 12 C^{e}_{bc}\partial_{\n}C^a_{de}-\frac 12 \partial_{\n}\rho_{b}{}^{\m}\partial_{\m}C^a_{cd}-\frac 12 C^a_{de}\partial_{\n}C^{e}_{bc}\bigg) \nn\\[4pt] 
&+&a^{a_1}\dots a^{a_{n+2}}\bigg(\frac{(-1)^{n+1}}{(n+1)!}\rho_{a_{n+2}}{}^{\m}\partial_{\m}C^a_{a_1\dots a_{n+1}}+\frac{(-1)^{n+1}}{2 n!}C^b_{a_1a_{n+2}}C^{a}_{b a_2\dots a_{n+1}}\nn\\[4pt] && \qquad \qquad -\,\frac{1}{2n!}\rho_{a_1\dots a_{n}}{}^{\m}\partial_{\m}C^a_{a_{n+1}a_{n+2}}+\frac{(-1)^n}{(n+1)!}C^a_{a_{n+2}b}C^b_{a_1\dots a_{n+1}}\bigg)\,.
 \eea 
 The first and second parentheses once again vanish due to the ground Lie algebroid data, the first being the Jacobi identity of the binary Lie bracket and the second the derivative of the first. The third parentheses this time give the local coordinate expression of the mixed Jacobi identity \eqref{mixed jacobi} and complete the agreement with the definition of a gapped Lie $n$-algebroid.  
\begin{rmk}
    The form of the homological vector field is generic for $n>3$. For $n\le 3$, the structure can proliferate but we do not discuss this further in this paper. See \cite{Thanasis R-Poisson} for more on this in the special case of $E=T^{\ast}M$ and $M$ a Poisson manifold.
\end{rmk}
As evident from all discussions up to this point, the case of $E=T^{\ast}M$ with $M$ a Poisson or an R-Poisson manifold is special. Since Poisson is a subcase of R-Poisson we discuss directly the latter and its generalization to the twisted case. Moreover, note that when $E=T^{\ast}M$ there exist another dg manifold with the same local structure as $T[n-1]E[1]$, namely the second order cotangent bundle $T^{\ast}[n]T^{\ast}[1]M$. Indeed, the local coordinate charts of the two cases are indistinguishable. This is a graded geometric manifestation of the fact that there exists a diffeomorphism between $TT^{\ast}M$ and $T^{\ast}T^{\ast}M$ \cite{legendre,legendre2}. 
Then, twisted R-Poisson $n$-algebroids can be described as dg manifolds of degree $n$ arising from a second order cotangent bundle $(T^{\ast}[n]T^{\ast}[1]M,Q^{{RP_n}}_{H})$  with homological vector field 
  \bea 
Q_H^{{RP_n}}&=&Q^{\ast}+\big((-1)^{n}\Pi^{\n\m}z_{\n}-\partial_{\n}\Pi^{\m\k}a_{\k}y^{\n}+\frac 1{n!}R^{\m\n_1\dots \n_{n}}a_{\n_1}\dots a_{\n_{n}}\big)\frac{\partial}{\partial y^{\m}} \nn\\[4pt] 
  &-& \big(\partial_{\m}\Pi^{\n\k}a_{\n}z_{\k}+\frac{(-1)^{n}}{2}\partial_{\m}\partial_{\n}\Pi^{\k\l}y^{\n}a_{\k}a_{\l}-\frac{(-1)^{n}}{(n+1)!}f_{\m}{}^{\n_1\dots\n_{n+1}}a_{\n_1}\dots a_{\n_{n+1}}\big)\frac{\partial}{\partial z_{\m}}\,,\nn\\
  \eea 
  where $(x^{\m},a_{\m},y^{\m},z_{\m})$ are degree $(0,1,n-1,n)$ coordinates on the dg manifold, 
  \be 
  f_{\m}{}^{\n_1\dots\n_{n+1}}=\partial_{\m}R^{\n_1\dots \n_{n+1}}-H_{\m}{}^{\n_1\dots \n_{n+1}}\,,
  \ee 
  and $Q^{\ast}$ is the homological vector field of the underlying Poisson Lie algebroid.
\begin{rmk}\label{Poisson lie n}
    We note that for vanishing $R$ this is still a homological vector field with the sole condition that $\Pi=(\Pi^{\m\n})$ is a Poisson structure. In other words we can realize a Poisson manifold not only as a Lie algebroid but also as a Lie $n$-algebroid. We call this the \textbf{Poisson Lie $n$-algebroid} in what follows and denote its homological vector field as $Q^{\ast}_n:=Q^{{RP_n}}_{H}|_{H=0=R}$. This structure is the backbone of the higher dimensional Poisson sigma models constructed in Section \ref{sec21}.
\end{rmk}

We close this section by emphasizing  the difference between the transition from Poisson to twisted Poisson structure and the transition between R-Poisson and twisted R-Poisson structure. In the former case we remain in the category of Lie algebroids, while on the contrary in the latter case we switch from genuine Lie $n$-algebroids to almost Lie $n$-algebroids. This is not unexpected once we recall the similar and actually related instance of pre-Courant algebroids \cite{Vaisman:2004msa} and specifically of 4-form twisted Courant algebroids \cite{Hansen:2009zd}, which were shown in \cite{BG} to correspond to symplectic almost Lie 2-algebroids, unlike Courant algebroids which are genuine symplectic Lie 2-algebroids \cite{dimaphd}. Twisted R-Poisson $n$-algebroids are similar to this second instance of structural transition. The case $n=2$ in particular has a direct relation to pre-Courant algebroids. Indeed for $n=2$ the gap vanishes and the two brackets are a binary and a ternary one. What we call an R-Poisson $n$-algebroid here can then be seen as a Dirac structure in a Lie 2-algebroid and more generally in a Lie $n$-algebroid, a perspective employed and elaborated in Ref. \cite{Ikeda:2021rir}.

\section{Polytorsion tensor for n-form connections}
\label{sec5} 

Given a Lie algebroid $E$ we can define the notions of an  $E$-connection and an $E$-covariant derivative on some vector bundle $V$ including the special case $V=E$.  We shall often refer to them as ``$E$-on-$V$'' connections for simplicity. Moreover, an $E$-representation is an $E$-connection of $V$ whose curvature vanishes, a flat $E$-on-$V$ connection. When the underlying manifold $M$ is a point, the Lie algebroid $E$ is a Lie algebra $\mf g$ and a flat $E$-connection as above corresponds to the usual notion of Lie algebra representation on some vector space $V$. The case $V=E$ then is the analog of representing the Lie algebra on itself, as is the case with the adjoint representation.  
To be more specific, an $E$-connection on $E$, denoted $\nabla^{\E}: \G(E)\to \G(T^{\ast}M\otimes E)$ is a map that for $f\in C^{\infty}(M)$ satisfies 
\be 
\nabla^{\E}(fe)=f\,\nabla^{\E}e+\dd f\otimes e\,.
\ee 
It gives rise in an obvious way to an $E$-covariant derivative such that 
\bea  
\nabla^{\E}_{fe}e'&=&f\,\nabla^{\E}_{e}e'\,, \\[4pt] 
\nabla^{\E}_{e}(f e')&=& f\,\nabla^{\E}_{e}e'+\rho(e)f \, e'\,.
\eea 
Hence we shall be working with covariant derivatives but also sometimes refer to them as connections in the sense described here. For $E$-connections on $E$ one may define $E$-torsion and $E$-curvature tensors in the usual way. We do not report the formulas yet because they will soon follow as special cases of the generalization we will consider. Note that an $E$-connection on $E$ is not necessarily induced by an ordinary connection $\nabla: \G(TM)\times \G(E)\to \G(E)$ on $E$ with the aid of the anchor map of the Lie algebroid. When it is, we shall denote this canonical induced $E$-on-$E$ connection as $\mathbullet{\nabla}^{\E}$, in particular 
\be 
\mathbullet\nabla^{\E}_e\equiv \nabla_{\rho(e)}\,.
\ee 
In that case we shall be able to relate the $E$-connection to the so-called basic curvature of an ordinary connection on $E$ that measures the compatibility of the connection with the Lie bracket defined on $E$. 

This brief introduction to $E$-connections and $E$-covariant derivatives combined with the discussion in Section \ref{sec3} brings us to the question of whether we can meaningfully and covariantly differentiate a 1-form along an $n$-form. In other words, whether we can define an $E_n$-connection on $E=T^{\ast}M$ such that it has some desired properties like linearity and the Leibniz rule. Equipped with the map $R: E_n\to TM$ of the anchored bundle $(E_n,R)$ we can indeed define such an $E_n$-covariant derivative as 
\begin{align} 
\nabla^{\E_n}: \G(E_n\otimes T^{\ast}M) &\to \G(T^{\ast}M) \nn\\[4pt]
(\hat e, e) &\mapsto \nabla^{\E_n}_{\hat e}e\,,
\end{align} 
such that 
\bea 
\nabla^{\E_n}_{f\hat e}e&=&f\nabla^{\E_n}_{\hat e}e\,, \\[4pt]  
\nabla^{\E_n}_{\hat e}(fe)&=&f\,\nabla^{\E_n}_{\hat e}e + R(\hat e)f\, e\,.
\eea 
  We know of at least one simple example of such an $E_n$-connection; the canonical one induced by an ordinary connection with the assistance of the higher anchor map $R$, which in accord with previous notation reads 
\be 
\mathbullet\nabla^{\E_n}_{\hat e}=\nabla_{R(\hat e)}\,.
\ee 
This is of course not the most general such connection but it serves as a proof of principle and it is the one we will actually need in the present paper. 

For our purposes in this paper, we would like to know whether there exists a meaningful notion of torsion for such connections. In general, when for two vector bundles $V_1$ and $V_2$ with suitable structure one defines a notion of $V_1$-covariant derivative on sections of $V_2$, a torsion tensor is not defined unless $V_1=V_2$. That was after all the reason we focused initially on $E$-connections on $E$ rather than on any other vector bundle. But now the situation has changed; we have an $E_n=\w^{n}T^{\ast}M$-connection on $E_1=E=T^{\ast}M$ and $E_n\ne E_1$ for $n\ne 1$. Nevertheless, the fact that $E_n$ is a tensor power of $E$ opens up a new possibility described in the following definition: 
\begin{defn}
 The $E_n$-polytorsion of an $E_n$-connection $\nabla^{\E_n}$ on the cotangent bundle of a smooth manifold $M$, where $E_n=\w^nT^{\ast}M$, is a map ${T^{\nabla^{\E_n}}}:\G(\otimes^{n+1}T^{\ast}M)\to \G(T^{\ast}M)$ given as
 \be
 {T^{\nabla^{\E_n}}}(e^{(1)},\dots,e^{(n+1)})=\sum_{r=1}^{n+1}(-1)^{n-r+1}\nabla^{\E_n}_{\hat e[r]}e^{(r)}-b_{n+1}(e^{(1)},\dots , e^{(n+1)})\,,
 \ee 
 where $e^{(r)}\in\G(T^{\ast}M)$ are 1-forms, $\hat e[r]\in \G(\w^{n}T^{\ast}M)$ is the decomposable $n$-form given by the exclusion of the $r$-th entry in the wedge string of $e^{(r)}$s
 and the bracket on the right hand side is the $(n+1)$-ary Koszul bracket. 
 \end{defn}
  Note that the definition makes sense also in presence of twists for the $(n+1)$-ary bracket as described in Section \ref{sec3}. We shall not use any special terminology or notation then; the polytorsion should always be computed with the right bracket available depending on the structure at hand.
 Note, moreover, that for $n=1$, in which case $E_n=T^{\ast}M$, the $E_1$-polytorsion is identical to the $E$-torsion as usually defined: 
 \be 
 T^{\nabla^{\E}}(e,e')= \nabla^{\E}_{e}e'- \nabla^{\E}_{e'} e - b_2(e,e')\,.
 \ee  
As expected, the $E$-torsion does not necessarily vanish when the ordinary torsion $T$ of an ordinary connection on $M$ vanishes; this remains true for general values of $n$.   
 To appreciate the structure of the $E_n$-polytorsion, we write down its expanded form in case $n=2$ in the case of torsionless connection $\nabla$ on $M$---i.e. when the twist in the bracket vanishes. Then the $E_2$-polytorsion is given as 
 \be 
 {T^{\nabla^{\E_2}}}(e^{(1)},e^{(2)},e^{(3)})=\nabla^{\E_2}_{e^{(2)}\w e^{(3)}}e^{(1)}+ \nabla^{\E_2}_{e^{(3)}\w e^{(1)}}e^{(2)}+ \nabla^{\E_2}_{e^{(1)}\w e^{(2)}}e^{(3)}-b_3(e^{(1)}, e^{(2)} , e^{(3)})\,,
 \ee 
 where the ternary Koszul bracket appears in \eqref{higherKoszul3}. 

 The $E_n$-polytorsion is a tensorial object that satisfies the correct $C^{\infty}$-linearity properties in all its entries, specifically 
 \be 
T^{\nabla^{\E_n}}(e^{(1)},\dots,f e^{(r)},\dots,e^{(n+1)})=f\, T^{\nabla^{\E_n}}(e^{(1)},\dots,e^{(n+1)})\,.
 \ee  
 This is easily proven as follows: 
\begin{align*}
    T^{\nabla^{\E_{n}}}(e^{(1)}, \dots, fe^{(r)},\dots, e^{(n+1)}) & 
    = \sum_{\substack{j=1 \\ j\neq r}}^{n+1}(-1)^{n-j+1}\nabla^{\E_{n}}_{{\hat{e}_f[j]}}e^{(j)}+(-1)^{n-r+1}\nabla^{\E_{n}}_{\hat{e}[r]}(fe^{(r)})\\[4pt] 
    &- fb_{n+1}(e^{(1)}, \dots, e^{(r)}, \dots, e^{(n+1)})-(-1)^{n+r-1}R(\hat{e}[r])fe^{(r)} \\[4pt] 
    & = \sum_{\substack{j=1 \\ j\neq r}}^{n+1}(-1)^{n-j+1}{f}\,\nabla^{\E_{n}}_{\hat{e}[j]}e^{(j)}+(-1)^{n-r+1}f\,\nabla^{\E_{n}}_{\hat{e}[r]}e^{(r)}\\[4pt] 
    & +\cancel{(-1)^{n-r+1}R(\hat{e}[r])fe^{(r)}}-f\,b_{n+1}(e^{(1)}, ..., e^{(n+1)}) \nn\\[4pt] & -\cancel{(-1)^{n-r-1}R(\hat{e}[r])fe^{(r)}}\\ 
    & = f\,\sum_{j=1}^{n+1}\nabla_{\hat{e}[j]}^{\E_{n}}e^{(j)}-f\,b_{n+1}(e^{(1)}, \dots, e^{(n+1)})\\
    & = f\,T^{\nabla^{E_{n}}}(e^{(1)}, \dots, e^{(n+1)})\,,
\end{align*}
using the properties of the covariant derivative and of the higher bracket. 
    Note that in the first line $\hat{e}_{f}[j]$ with $j\ne r$ is rather obviously understood to be the decomposable $n$-form that lacks the entry $e^{(j)}$ and whose $r$-th entry is $f\, e^{(r)}$.

 Let us now compute the $E_n$-polytorsion for arbitrary $n$ using explicit local coordinate expressions for the anchor and the $E$-covariant derivative. In this part we work with the induced $E_n$-connection $\mathbullet\nabla^{\E_n}$; the general case proceeds in similar steps. First we expand a general $n$-form as $\hat e=\frac 1{n!}\hat e_{\m_1\dots\m_n}\dd x^{\m_1}\w\dots\w\dd x^{\m_n}$. Our choice of convention in the anchored bundle $E_n$ is such that the anchor $R$ acts on this $n$-form as 
 \be 
R(\hat e)=\frac 1{n!} R^{\m_1\dots \m_{n+1}}\hat e_{\m_1\dots\m_n}\partial_{\m_{n+1}}\,.
 \ee 
 Then the $E_n$-covariant derivative on $E$ has components 
 \be 
\mathbullet\nabla^{\E_n}_{\hat e}e=\nabla_{R(\hat e)}e=\frac 1{n!} R^{\m_1\dots \m_{n+1}}\hat e_{\m_1\dots \m_n} (\partial_{\m_{n+1}}e_\n-\Gamma_{\m_{n+1}\n}^{\rho}e_{\rho})\dd x^{\n}\,,
 \ee 
 where $\G_{\m\n}^{\rho}$ are the coefficients of the ordinary connection $\nabla$, which could also have torsion $T$. Of course, if this torsion $T$ is nonvanishing then the Koszul bracket is twisted accordingly. 
 In any case, a straightforward computation leads to the general result 
\be 
T^{\nabla^{\E_n}}=-\mathring{\nabla}R\,,
\ee 
where $\mathring{\nabla}$ is the torsionless part of the affine connection $\nabla$ on $M$. In other words, writing the connection coefficients of $\nabla$ as
\be 
\G^{\m}_{\n\rho}=\mathring{\Gamma}^{\m}_{\n\rho}+\frac 12 T^{\m}_{\n\rho}\,,
\ee 
with $\mathring{\Gamma}^{\m}_{[\n\rho]}=0$, then $\mathring{\Gamma}$ are the connection coefficients of $\mathring{\nabla}$.  In local coordinates, this reads 
\be 
T^{\nabla^{\E_n}}(e^{(1)},\dots, e^{(n+1)})=-\frac 1{n!}\mathring{\nabla}_{\rho}R^{\m_1\dots\m_{n+1}}e^{(1)}_{\m_1}\dots e^{(n+1)}_{\m_{n+1}}\dd x^{\rho}\,.
\ee 
For example, for $n=1$ and a (twisted or untwisted) Poisson structure given by $\Pi$, the $E$-torsion is equal to $-\mathring{\nabla}\Pi$, as found in \cite{Ikeda:2019czt}.
It thus becomes clear that the $E$-torsional data of the Poisson Lie algebroid $(T^{\ast}M, b_2,\Pi)$ follow as a corollary from the definition of $E_n$-torsion for $n=1$.  We emphasize that this holds both in the untwisted and twisted cases; the torsion of the base manifold drops out from the $E$-torsion, which only depends on the anchoring data. This is in agreement with the findings of \cite{Ikeda:2019czt}.

For $n>1$ there is a final twist to the above story, which is important for the field-theoretical considerations of Section \ref{sec2}. As we showed above, the twist $Tw_{n+1}$ of the $(n+1)$-ary Koszul bracket cancelled precisely the torsional part of $\nabla$ in the definition of the $E_n$-polytorsion. However, in the general case when there is a closed $(n+2)$-form $H$ available we saw in Section \ref{sec3} that there exists an additional twist given by $H_{n+1}$ in \eqref{Hn+1}. The two twists are identical for $n=1$, and therefore one of them is redundant then, but totally independent for $n>1$. The net result is that in the context of homotopy Poisson structures such as the twisted R-Poisson one, the $E_n$-polytorsion has additional terms. Specifically for twisted R-Poisson this is 
\be  
T^{\nabla^{\E_n}}=-\mathring\nabla R-\langle\Pi^{\otimes (n+1)},H\rangle\,.
\ee 
This is precisely the combination that arose in the context of higher-dimensional Hamiltonian mechanics on a target space with twisted R-Poisson structure that we recalled in section \ref{sec2} and thus it provides the sought after geometrical interpretation of the corresponding interaction coefficients in the BRST formalism. 

\section{Basic connections \& basic polycurvature tensor}
\label{sec6}

Going one step further, we would now like to define a higher analog of the basic curvature tensor for arbitrary $n$. The basic curvature tensor for a Lie algebroid measures the departure from compatibility between an ordinary connection on the Lie algebroid and the binary Lie bracket on it \cite{Blaom,Crainic}. In other words it is the answer to the question: \emph{Is a given $TM$-on-$E$ connection on a Lie algebroid $E$ a derivation of its Lie bracket?} For Courant algebroids the corresponding notion is found in \cite{Chatzistavrakidis:2023otk}. Moreover, an approach to this problem for Lie $n$-algebroids was suggested in \cite{Jotz}. 

Within the total space of a smooth fiber bundle fundamental role have two special kinds of forms, namely vertical forms (whose pushfoward to the base vanishes) and basic differential forms, that is pullbacks of the forms on the base. For principal bundles, basic differential forms make a subcomplex of the de Rham complex and Chern-Weil map in theory of characteristic classes takes values in its cohomology (basic cohomology). Secondary characteristic classes of Lie algebroids are studied in \cite{CF} where a related notion of basic connection is introduced. Two canonical (classes of) examples are introduced there starting from an ordinary connection $\nabla$ on a Lie algebroid $(E,b_2,\rho_2)$, one being an $E$-connection on $E$ and the other being an $E$-connection on the tangent bundle $TM$. They are defined respectively as 
\bea  
\label{basic E on E} \overline\nabla^{\E}_{e}e'&=&\mathbullet\nabla^{\E}_{e'}e+b_2(e,e') \\[4pt]
&=& \mathbullet\nabla^{\E}_e e'-T^{\mathbullet\nabla^{\E}}(e,e')\,, 
\\[4pt] 
\label{basic E on TM}
\overline\nabla^{\E}_{e}X&=&\rho_2(\nabla_{X}e)+[\rho_2(e),X]\,,
\eea 
where $X\in\G(TM)$ is a vector field on $M$ and we use the notation $\rho_2$ for the anchor of the Lie algebroid. Note that we expressed the first basic connection in two alternative ways, one with the Lie bracket and one with the $E$-torsion, since they are both useful in different places below. 
This basic connection is tightly related to the notion of the opposite (or conjugated) connection, one which when averaged with the original connection the result has vanishing torsion. Specifically, note that if we define the average connection 
\be 
\nabla^{\E,\text{avg}}=\frac 12 \left(\mathbullet\nabla^{\E}+\overline\nabla^{\E}\right)\,,
\ee 
then it is simple to see that 
\be 
T^{\nabla^{\E,\text{avg}}}=0\,.
\ee 
We will use this for guidance in the more general case. Regarding the second part of the basic connection, the one given in \eqref{basic E on TM}, it can be uniquely determined as follows. Asking whether the connection $\nabla$ on $E$ is a derivation of the Lie bracket $b_2$ there is a tensor that controls the answer to this question, called the basic curvature tensor \cite{Crainic}. $C^{\infty}(M)$-linearity singles out the use of the basic connection \eqref{basic E on TM}. We will see this explicitly right below in an even more general setting, therefore we skip the details at this point. 

Consider now that $E=T^{\ast}M$ for a twisted R-Poisson manifold $(M,\Pi,R,H)$. The basic connections above are sufficient for the ground Lie algebroid but not for the twisted R-Poisson $n$-algebroid, which comprises a higher Koszul bracket. To account for this, we consider an ordinary $TM$-on-$E$ connection $\nabla$ and extend it in the obvious way to the exterior power bundle $E_n=\w^n T^{\ast}M$. Then aside from the basic $E$-on-$E$ and $E$-on-$TM$ connections introduced above we also have the following $E_{n}$-on-$E$ and $E_n$-on-$TM$ basic connections:
\bea \label{EnonE}
\overline\nabla^{\E_n}_{\hat e}e&=&\mathbullet\nabla^{\E_n}_{\hat e}e-\k_n T^{\mathbullet\nabla^{\E_{n}}}(e^{(1)},\dots,e^{(n)},e)\,,\\[4pt]
\overline{\nabla}^{\E_n}_{\hat{e}}X&=&R(\nabla_{X}\hat{e})+[R(\hat{e}),X]\,,
 \label{EnonTM}
\eea 
respectively, where $\hat{e}=e^{(1)}\w\dots\w e^{(n)}$ is decomposable and $\k_n$ is a parameter such that $\k_1=1$ for consistency with the $E$-on-$E$ basic connection.{\footnote{Note that like ordinary connections, these basic connections on $E$  may be extended in an obvious way to the exterior bundle $E_n$. This means that we can also define  $E$-on-$E_n$ and $E_n$-on-$E_n$ basic connections, with their action on decomposable $n$-forms being
\bea 
\overline{\nabla}^{\E}_{e}\hat{e}&=&\sum_{r=1}^{n}(-1)^{r-1}\overline{\nabla}^{\E}_ee^{(r)}\w \hat{e}[r]\,,\\[4pt]
\overline\nabla^{\E_n}_{\hat{e}}\hat{e}'&=&\sum_{r=1}^{n}(-1)^{r-1}\overline\nabla^{\E_n}_{\hat{e}}e'^{(r)}\w\hat{e}'[r]\,.
\eea 
Obviously a similar extension holds for other connections such as the ordinary one $\nabla$.}} Essentially the upper formula gives a 1-parameter family of connections, even though we do not introduce a special notation for this. It may be written in the alternative form 
\be 
\overline\nabla^{\E_{n}}_{\hat e}e=\k_n\sum_{r=1}^{n}(-1)^{n-r}\mathbullet\nabla^{\E_n}_{{\hat e}[r]\w e}e^{(r)}+(1-\k_n)\mathbullet\nabla_{\hat{e}}e+\k_n b_{n+1}(e^{(1)},\dots, e^{(n)},e)\,,
\ee 
which is closer in spirit to the starting definition of the opposite connection.
If for example we demand that the higher average connection 
\be 
\nabla^{\E_n,\text{avg}}=\frac 12 \left(\mathbullet\nabla^{\E_n}+\overline\nabla^{\E_n}\right)\,,
\ee 
has vanishing polytorsion tensor,
\be 
T^{\nabla^{\E_n,\text{avg}}}=0\,,
\ee
then we find that $\k_n=2/(n+1)$. 
Equipped with the basic connections above, particularly  the $E_n$-on-$TM$ one,  we can give the following definition: 
\begin{defn}
The basic $n$-polycurvature of a connection $\nabla$ on $T^{\ast}M$ equipped with the $(n+1)$-ary Koszul bracket is a map $ {S^{\nabla}}:\G(\otimes^{n+1}T^{\ast}M\otimes TM)\to \G(T^{\ast}M)$ 
given by the formula
 \bea
 {S^{\nabla}}(e^{(1)},\dots,e^{(n+1)})X&=&\nabla_{X}b_{n+1}(e^{(1)},\dots,e^{(n+1)})  -\sum_{r=1}^{n+1}(-1)^{n-r}\nabla_{\overline{\nabla}^{\E_n}_{\hat{e}[r]}X}e^{(r)} -  \nn\\[4pt] &&- \sum_{r=1}^{n+1} 
 b_{n+1}(e^{(1)},\dots,e^{(r-1)},\nabla_{X}e^{(r)},e^{(r+1)},\dots,e^{(n+1)})\,,\,\,\,\label{Sep1}
 \eea 
 where $X\in \G(TM)$ is a vector field on $M$, $e^{(r)}\in\G(T^{\ast}M)$ are 1-forms, $\hat e[r]\in \G(\w^{n}T^{\ast}M)$ is the decomposable $n$-form \eqref{deco}, the $E_n$-connection on $TM$ is given in \eqref{EnonTM}
 and $b_{n+1}$ is the $(n+1)$-ary (possibly twisted) Koszul bracket.
\end{defn}

In fact, the form of the basic $E_n$-on-$TM$ connection in Eq. \eqref{EnonTM} is completely fixed by the requirement of linearity of the basic $n$-polycurvature, including of course the standard case of $n=1$ as alluded to earlier in the present section. As we did with the $E_n$-polytorsion, we can compute the basic $n$-polycurvature and bring it into the following form in local coordinates 
\bea 
&& S^{\nabla}(e^{(1)},\dots, e^{(n+1)})X=\frac 1{n!}X^{\m}e^{(1)}_{\m_1}\dots e^{(n+1)}_{\m_{n+1}}\times \nn\\[4pt] && \hspace{110pt}\times \left(\nabla_{\m}\mathring{\nabla}_{\rho}R^{\m_1\dots \m_{n+1}}-R^{\n\m_2\dots \m_{n+1}}{\cal R}^{\m_1}_{\rho\m\n}-\dots - R^{\m_1\dots \m_n\n}{\cal R}^{\m_{n+1}}_{\rho\m\n}\right)\dd x^{\rho},\nn
\eea 
where ${\cal R}$ is the ordinary curvature of the ordinary affine connection $\nabla$ on $M$, with or without torsion. 
Combining this with the result we found for the $E_n$-polytorsion and expressing everything in coordinate-free form, we conclude that 
\be \label{Sep2}
S^{\nabla}=-\nabla T^{\mathbullet\nabla^{\E_n}}-(n+1)\,\text{Alt}\langle R,{\cal R}\rangle\,,
\ee 
where $\text{Alt}$ is antisymmetrization with weight 1 in $\otimes^{n+1}T^{\ast}M$. It is also obvious that the basic polycurvature is a linear object.
Importantly, for the special case of $n=1$ and Poisson or twisted Poisson bivector $\Pi$, the general formulas \eqref{Sep1} or \eqref{Sep2} reproduce (in one case up to an overall sign that depends solely on different conventions) the result for the ordinary basic curvature found in the literature, see \cite{Blaom, Crainic} and \cite{Kotov:2016lpx} respectively for each formula. We have generalized this to the higher case here and we comment further in the concluding section about the possible relation to the concept of representations up to homotopy. The main conclusion here is that this basic polycurvature tensor is precisely the coefficient \eqref{basic poly in brst} that we encountered in the BRST formalism of the twisted R-Poisson sigma model.

\section{Conclusions and outlook} 
\label{sec7} 

In this paper we demonstrated that the coefficients of the interaction terms in the BV/BRST formulation of a class of topological sigma models in any dimension acquire a geometric interpretation as tensors for generalized connections on gapped Lie $n$-algebroids. The gauge field theories we considered include higher-dimensional Poisson sigma models without or with Wess-Zumino term and their deformations by generalized R-flux multivectors. These theories are attractive because they constitute a vertical slice in the space of AKSZ models in any dimension, they are simple and tractable and they feature all unconventional properties that go beyond Yang-Mills-like theories, namely an open gauge algebra, reducibility of gauge generators and dependence of coefficients on the fields. What is more, the openness of the gauge algebra is nonlinear in the field equations and this is arguably the simplest class of gauge theories with this characteristic. Moreover, although in the untwisted case they correspond to QP manifolds and they are vanilla AKSZ theories, generalized fluxes obstruct this via twists and the target is just a Q manifold, in which case the AKSZ construction is not applicable at face value. 

To determine the geometric meaning of the several field-dependent coefficients in the interaction terms of the master action of the models, we introduced the notion of a gapped (almost) Lie $n$-algebroid. In the case of interest in this paper, this structure features two anchor maps, one being a usual map from the cotangent bundle of the twisted R-Poisson manifold to its tangent bundle given by a Poisson 2-vector $\Pi$ and the other being a higher anchor from the $n$-th exterior power bundle of $T^{\ast}M$ to $TM$, given by an $(n+1)$-vector $R$. It moreover contains two brackets, one binary Koszul bracket and one $(n+2)$-form-twisted $(n+1)$-ary higher Koszul bracket of 1-forms. Both brackets satisfy a Leibniz rule with respect to the two anchors respectively and they verify the conditions that guarantee that the space of sections has the structure of an $L_{\infty}$ algebra in the untwisted case. In the twisted case, an analog of an almost Lie algebroid is encountered. A gapped Lie $n$-algebroid is then an example of an $L_{\infty}$ algebroid in the sense of \cite{Voronov2} and a gapped almost Lie $n$-algebroid a twisted version thereof that violates the highest Jacobi identity with the violation not being necessarily controlled by a higher bracket. This is a different construction than the one that provides a correspondence between homological vector fields and $L_{\infty}$ algebroids with all brackets but the binary one being $C^{\infty}$-linear, as e.g. in \cite{BP}.

We studied connections on the structure described above. Starting with an ordinary affine connection on 
the cotangent bundle, we showed that the two anchors induce canonically two generalized connections, one $E$-on-$E$ and one $E_{n}$-on-$E$ connection, where $E=T^{\ast}M$ and $E_{n}=\w^nT^{\ast}M$. For the first connection there is a natural $E$-torsion tensor defined in the usual way. We showed that for the second connection there also exists a tensor, which involves the higher Koszul bracket and we call it the $E_{n}$-polytorsion because it reduces to the $E$-torsion for $n=1$. 

Apart from the canonically induced connections, the affine connection on $T^{\ast}M$ also induces two sets of two ``basic'' connections. The first set is based on the usual anchor and it was originally considered in \cite{CF}. They are an $E$-on-$E$ and an $E$-on-$TM$ connection, with the latter crucially participating in defining the tensor that measures whether the affine connection is a derivation of the Lie bracket for Lie algebroids, known as the basic curvature tensor \cite{Crainic}. Here we defined a second set of an $E_{n}$-on-$E$ and an $E_{n}$-on-$TM$ connection and introduced the basic polycurvature tensor that measures whether the affine connection is a derivation of the higher $(n+1)$-ary Koszul bracket. All torsion and basic curvature tensors described above appear in the BRST formalism of the twisted R-Poisson sigma models, as summarized in Table \ref{Table1}. 

\begin{table}
	\begin{center}	\begin{tabular}{| c | c | c |}
			\hline 
			\multirow{3}{10.5em}{\centering{Interaction (examples)}} & \multirow{3}{10.5em}{\centering{Coefficient}}&\multirow{3}{10.5em}{\centering{Geometric meaning}} \\  &  &    \\ && \\ \hline
			\multirow{3}{5.5em}{} &  &   \\ $ZA, X^{\dagger}\e, Y^{\dagger}\psi$ & $\Pi^{\m\n}$ & Anchor   \\ && \\\hline 
			\multirow{3}{5.5em}{} && \\ $A^{n+1}, \chi^{\dagger}\e^2, Y^{\dagger}A\e$ & $R^{\m_1\dots\m_{n+1}}$ &  Higher anchor   \\ &&
			\\\hline 
   \multirow{3}{10.5em}{} & \multirow{3}{10.5em}{}&\multirow{3}{10.5em}{} \\ $YA^2, \e^{\dagger}\e^2, \chi^{\dagger}\e\chi, Y^{\dagger}Y\e, Y^{\dagger}A\chi$ & $\nabla_{\r}\Pi^{\m\n}$ & $E$-torsion   \\ && \\ \hline
   \multirow{3}{10.5em}{} & \multirow{3}{10.5em}{}&\multirow{3}{10.5em}{} \\ $Z^{\dagger}A^2\e^{n-1}, \psi^{\dagger}A\e^n, \psi_{(1)}^{\dagger}\e^{n+1}$ & $\nabla_{\n}R^{\m_1\dots\m_{n+1}}-H_{\n}{}^{\m_1\dots\m_{n+1}}$ & $E_n$-polytorsion   \\ && \\ \hline
   \multirow{3}{10.5em}{} & \multirow{3}{10.5em}{}&\multirow{3}{10.5em}{} \\ $Z^{\dagger}A^{\dagger}\e^2, Z^{\dagger}Y^{\dagger}\e\chi, Z^{\dagger}Z^{\dagger}\e\psi$ & $\nabla_{\k}\nabla_{\l}\Pi^{\m\n}-2\Pi^{\r[\m}{\mc R}^{\n]}{}_{\l\r\k}$ & Basic curvature   \\ && \\ \hline
   \multirow{3}{10.5em}{} & \multirow{3}{10.5em}{}&\multirow{5}{10.5em}{} \\  & $\nabla_{\m}\big(\nabla_{\n}R^{\m_1\dots\m_{n+1}}-H_{\n}{}^{\m_1\dots\m_{n+1}}\big)$   &    \\ $Z^{\dagger}\psi^{\dagger}\e^{n+1}, Z^{\dagger}Z^{\dagger}A\e^n$&& Basic polycurvature \\ &$-(n+1)R^{\r[\m_1\dots\m_n}{\mc R}^{\m_{n+1}]}{}_{\n\r\m}$& \\ && \\  \hline\end{tabular}\end{center}\caption{Summary of the main result on the geometric meaning of the interaction coefficients for twisted R-Poisson sigma models. In the first column we identify some representatives of the several interactions between fields, ghosts and antifields grouping them by the coefficient they come with, which appears in the second column in manifestly covariant form in terms of the affine connection $\nabla$ on $T^{\ast}M$. In the third column we report the geometric interpretation of each coefficient in terms of data on the corresponding gapped almost Lie $n$-algebroid and the tensors of the generalized connections induced by $\nabla$, as described in the main text. Further higher interactions correspond to derivatives of these main geometrical data.}\label{Table1}\end{table}

Interestingly, the basic curvature tensor is essential in defining the adjoint representation in the case of Lie algebroids \cite{Crainic}. This is a representation up to homotopy (ruth), defined as a connection on a chain complex instead of a vector bundle, or as a graded vector bundle with some additional data. In the case of the adjoint representation the chain complex is simply $E\overset{\rho}\rightarrow TM$ with the anchor taken as a degree 1 map and the additional data are the two basic connections, thought of as a single $E$-connection on the chain complex and the basic curvature tensor. Note that a ruth is not associated to a strictly flat connection; the curvature of the connection is controlled precisely by the map $\rho$ and the basic curvature of the ordinary connection from which the basic connections are induced. The presence of the basic curvature and its higher analogons in the BRST differential points to a relation between ruths and homological perturbation theory, essentially already alluded to in \cite{Crainic}. It would be interesting to make this precise in explicit examples of gauge theories, such as the ones treated here. This would also give natural examples of ruths for Lie $n$-algebroids that were only recently defined \cite{Jotz, ruth n}. Although our initial intention was to report on this in the present paper, the topic is broader than the particular cases discussed here and deserves separate treatment. We plan to report on the specific relation between ruths and gauge theory, including Yang-Mills-like models and higher gauge theory, in future work.      

\paragraph{Acknowledgements.} We gratefully acknowledge helpful discussions with Thomas Basile, Panos Batakidis, Noriaki Ikeda, Lara Jonke, Sylvain Lavau, Fani Petalidou, Dima Roytenberg and Zhenya Skvortsov. This work was supported by the Croatian Science Foundation project IP-2019-04-4168
“Symmetries for Quantum Gravity”. This article is also based upon work from the COST Action
21109 CaLISTA, supported by COST (European Cooperation in
Science and Technology).

\end{document}